\newtheorem{theorem}{Theorem}[section]
\newtheorem{proposition}[theorem]{Proposition}
\newtheorem{lemma}[theorem]{Lemma}
\theoremstyle{definition}
\newtheorem{definition}[theorem]{Definition}
\newtheorem{corollary}[theorem]{Corollary}
\theoremstyle{remark}
\newtheorem{remark}[theorem]{Remark}
\numberwithin{equation}{section}
\def\d{{\rm d}}
\def\e{{\rm e}}
\def\i{{\rm i}}
\def\Id{{\rm Id}}
\renewcommand{\Re}{\operatorname{Re}}
\def\op{\operatorname{op}}
\def\eps{\varepsilon}
\def\C{{\mathbb C}}
\def\N{{\mathbb N}}
\def\R{{\mathbb R}}
\def\Rd{{{\mathbb R}^d}}
\def\Rdd{{{\mathbb R}^{2d}}}
\def\H{{\mathcal H}}
\def\W{{\mathcal W}}
\def\Sc{{\mathcal S}}
\def\defeq{\mathrel{\mathop:}=}
\def\parentheses#1{{\left(#1\right)}}
\def\brackets#1{{\left[#1\right]}}
\def\norm#1{{\left\|#1\right\|}}
\def\abs#1{{\left|#1\right|}}
\def\ip#1#2{{\left\langle#1,#2\right\rangle}}
\title{A new Phase Space Density for Quantum Expectations}
\thanks{
Support by the German Research Foundation (DFG), Collaborative Research Center SFB-TRR 109, the
graduate program TopMath of the Elite Network of Bavaria and the Simons Visiting Professorship program of the Simons Foundation and the Mathematisches Forschungsinstitut Oberwolfach is gratefully acknowledged.
}
\author{Johannes Keller}
\address{Zentrum Mathematik, Technische Universit\"at M\"unchen, Boltzmannstra\ss e 3, 85748 Garching bei M\"unchen.}
\email{keller@ma.tum.de}
\author{Caroline Lasser}
\address{Zentrum Mathematik, Technische Universit\"at M\"unchen, Boltzmannstra\ss e 3, 85748 Garching bei M\"unchen.}
\email{classer@ma.tum.de}
\author{Tomoki Ohsawa}
\address{Department of Mathematical Sciences, The University of Texas at Dallas, 800 West Campbell Rd, Richardson, TX 75080-3021.}
\email{tomoki@utdallas.edu}
\keywords{
Time-dependent Schr\"odinger equation, Egorov's theorem, expectation values, Husimi functions}
\subjclass[2010]{
81S30,81Q20,65D30,65Z05}
\begin{document}
\pagestyle{myheadings}
\thispagestyle{plain}
\markboth{J. KELLER, C. LASSER, AND T. OHSAWA}{A NEW PHASE SPACE DENSITY FOR QUANTUM EXPECTATIONS} 
\maketitle

\begin{abstract}
We introduce a new density for the representation of quantum states on phase space. It is constructed as a weighted difference of two smooth probability densities using the Husimi function and first-order Hermite spectrograms. In contrast to the Wigner function, it is accessible by sampling strategies for positive densities. In the semiclassical regime, the new density allows to approximate expectation values to second order with respect to the high frequency parameter and is thus more accurate than the uncorrected Husimi function. As an application, we combine the new phase space density with Egorov's theorem for the numerical simulation of time-evolved quantum expectations by an ensemble of classical trajectories. We present supporting numerical experiments in different settings and dimensions. 
\end{abstract}

\section{Introduction}

The wave functions describing the nuclear part of a molecular quantum system are square integrable functions on $\R^d$ with specific properties. They are smooth functions, but highly oscillatory and the dimension $d$ is large. 
The frequencies of oscillations are typically related to a small semiclassical parameter $\eps>0$, which can be thought of as the square root of the ratio of the electronic versus the average nuclear mass for the molecular system of interest. One expects
\begin{equation*}
\int_{\R^d} \overline{\psi}(x)\,(-\i\eps\nabla_x)\psi(x) \,\d x = O(1)
\end{equation*}
as $\eps\to0$ for most nuclear wave functions $\psi\in L^2(\R^d)$. Often the semiclassical analysis of a molecular quantum system requires a phase space representation of the nuclear wave function, the most popular being the Wigner function
\begin{equation*}
\W_\psi(z) \defeq (2\pi\eps)^{-d} \int_{\R^d} \overline\psi(q+\tfrac{y}{2}) \psi(q-\tfrac{y}{2}) \e^{\i y\cdot p/\eps} \d y,\qquad z=(q,p)\in\R^{2d}.
\end{equation*}
The Wigner function is a square integrable real-valued function on the phase space $T^*\R^d \cong \R^{2d}$ with many striking properties as for example
\begin{align*}
&\int_{\R^d} x|\psi(x)|^2 \,\d x = \int_{\R^{2d}} q \,\W_\psi(z)\, \d z,\\
& \int_{\R^d} \overline{\psi}(x)\,(-\i\eps\nabla_x)\psi(x) \,\d x = \int_{\R^{2d}} p\, \W_\psi(z) \,\d z.
\end{align*}
However, in most cases the Wigner function is not a probability density on phase space, since it may attain negative values. 

A guiding example is provided by the superposition of Gaussian wave packets. The semiclassically scaled Gaussian wave 
packet $g_{z_0}$ with phase space center $z_0=(q_0,p_0)\in\R^{2d}$ is defined as  
\begin{equation}\label{eq:gauss}
g_{z_0}(x) \defeq (\pi\eps)^{-d/4} \exp\!\left(-\tfrac{1}{2\eps}|x-q_0|^2 + \tfrac{\i}{\eps}p_0\cdot(x-\tfrac12 q_0)\right),\qquad x\in\R^d.
\end{equation}
It satisfies
\begin{equation*}
\int_{\R^d} x|g_{z_0}(x)|^2\,\d x = q_0\quad\text{and}\quad
\int_{\R^d} \overline{g_{z_0}}(x)\,(-\i\eps\nabla_x)g_{z_0}(x) \,\d x = p_0.
\end{equation*}
Its Wigner function is a nonnegative Gaussian function centered at the point $z_0$. However, the Wigner function of the superposition \begin{equation*}
\psi=g_{z_1}+g_{z_2},\qquad z_1,z_2\in\R^{2d},
\end{equation*} 
has three regions of localization as seen in the left panel of Figure~\ref{fig:superposition}. 
There are two regions around the points $z_1$ and $z_2$, respectively, where the Wigner function has nonnegative Gaussian shape, whereas in between around the midpoint of $z_1$ and $z_2$ there is an oscillatory region with pronounced negative values.  

\begin{figure}[h!]
\includegraphics[width=4cm]{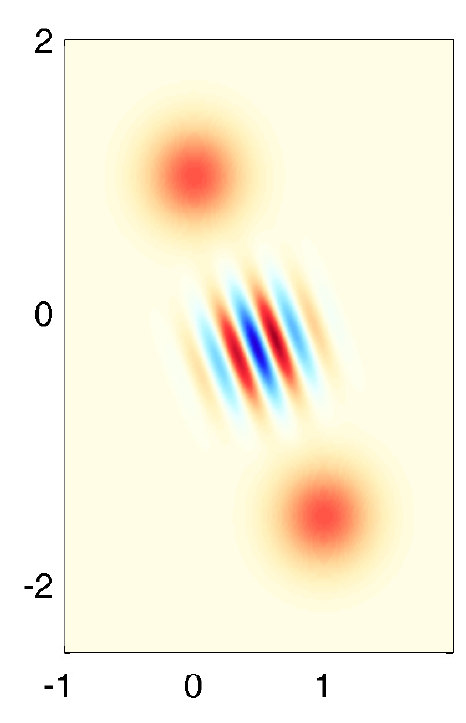} \includegraphics[width=4cm]{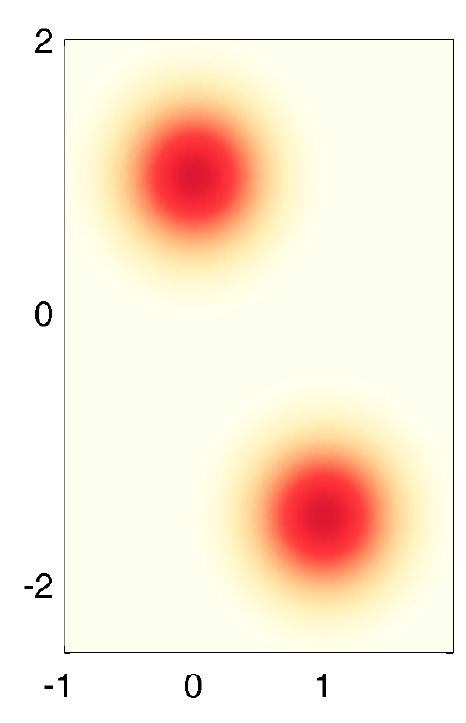} \includegraphics[width=4cm]{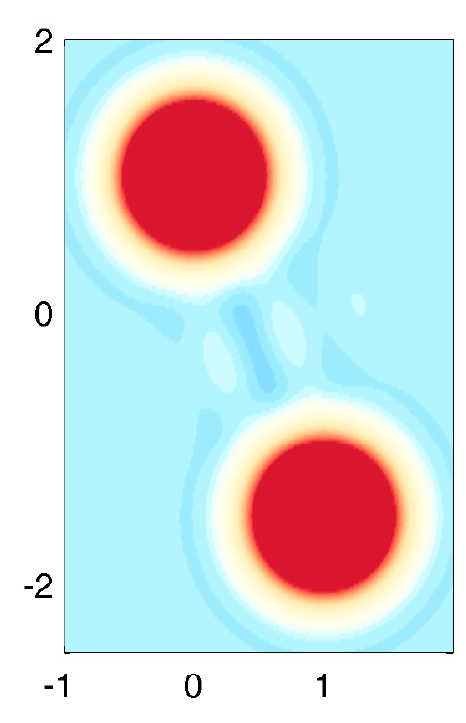}
\caption{Contour plots of the Wigner function (left), the Husimi function (middle), and the density $\mu_\psi$ (right) for a Gaussian superposition $\psi = g_{z_1} + g_{z_2}$.
We chose the phase space centers $z_1=(0,1)$, $z_2=(1,-\tfrac32)$, and the semiclassical parameter $\eps = 0.14$. Negative values are indicated by blue color.\label{fig:superposition}}
\end{figure}

One way of obtaining a nonnegative phase space representation of a wave function is to convolve its Wigner 
function with another Wigner function. One then calls  the nonnegative function
\begin{equation*}
\W_\psi*\W_\phi,\qquad \phi\in\Sc(\R^d),
\end{equation*}
a {\em spectrogram} of $\psi$. A widely used spectrogram is the Husimi function of $\psi$,
\begin{equation*}
\H_\psi \defeq \W_\psi*\W_{g_0},
\end{equation*} 
which is the spectrogram of $\psi$ with $\phi$ being the Gaussian wave packet~$g_{0}$ centered at the phase space origin.
For the superposition example, the smoothing of the convolution removes the oscillations and widens the 
Gaussian profiles around the centers $z_1$ and $z_2$; see the middle 
panel of Figure~\ref{fig:superposition}. However, the smoothing also destroys important 
exact relations satisfied by the Wigner function; {see also \cite[\S 7]{J97}}. Let $a:\R^d\to\R$ be a smooth function with an appropriate decay property. Then, 
\begin{equation*}
\int_{\R^d} a(x)|\psi(x)|^2 \,\d x = \int_{\R^{2d}} a(q) \W_\psi(z) \,\d z,
\end{equation*}
while
\begin{equation*}
\int_{\R^d} a(x)|\psi(x)|^2 \,\d x = \int_{\R^{2d}} a(q) \H_\psi(z) \,\d z + O(\eps)
\end{equation*}
as $\eps\to0$, where the error term depends on second and higher order derivatives of the function $a$. 
Hence only the first moment of the position density $x\mapsto|\psi(x)|^2$ is exactly recovered by the Husimi function.

Our aim is now to systematically construct a new phase space density that maintains some of the key properties of the Wigner function, while being amenable to sampling strategies for positive densities. We propose a proper reweighting of the Husimi function and the spectrograms obtained from the first order Hermite functions 
\begin{equation*}
\varphi_{e_j}(x) \defeq (\pi\eps)^{-d/4} \sqrt{\tfrac{2}{\eps}}x_j \exp\!\left(-\tfrac{1}{2\eps}|x|^2\right),\qquad x\in\R^d,\qquad j=1,\ldots,d.
\end{equation*}
We define a real-valued function $\mu_\psi:\R^{2d}\to\R$ by
\begin{equation*}
  \mu_\psi \defeq (1+\tfrac{d}2)\W_\psi*\W_{g_0}- \tfrac12 \sum_{j=1}^d \W_\psi*\W_{\varphi_{e_j}}. 
\end{equation*} 
By construction, $\mu_\psi$ is the difference of two nonnegative functions, a scalar multiple of the Husimi function  $\W_\psi*\W_{g_0}$ on the one side, and half the sum of the Hermite spectrograms 
$\W_\psi*\W_{\varphi_{e_1}},\ldots,\W_\psi*\W_{\varphi_{e_d}}$ on the other side. 
For example, a single Gaussian wave packet $\psi=g_{z_0}$ centered at the point $z_0\in\R^{2d}$ results in 
\begin{align*}
\mu_{g_{z_0}}(z) &= \left(1+\tfrac{d}{2}\right) (2\pi\eps)^{-d} \exp\!\left(-\tfrac{1}{2\eps}|z-z_0|^2\right)\\
  &\quad -\tfrac{1}{2\eps}|z-z_0|^2\, (2\pi\eps)^{-d} \exp\!\left(-\tfrac{1}{2\eps}|z-z_0|^2\right),
\end{align*}
which is the difference of two well-localized positive densities. For the superposition of two Gaussian wave packets 
$\psi = g_{z_1} + g_{z_2}$, we obtain a density $\mu_\psi$ characterized by two islands of positive values around the centers $z_1$ and $z_2$, which are surrounded by a sea of negative values; see the contour plot in the right panel of Figure~\ref{fig:superposition}
and the explicit formula in \S\ref{sec:superposition}. 

The new phase space function $\mu_\psi$ allows for the exact representation of the moments of $\psi$ up to order three in the following sense. 
If $a:\R^d\to\R$ is a polynomial of degree less than or equal to three, then
\begin{align*}
&\int_{\R^d} a(x) |\psi(x)|^2 \,\d x = \int_{\R^{2d}} a(q) \mu_\psi(z) \,\d z,\\
&\int_{\R^d} \overline{\psi}(x) a(-\i\eps\nabla)\psi(x) \,\d x = \int_{\R^{2d}} a(p) \mu_\psi(z) \,\d z.
\end{align*}
For arbitrary smooth functions $a:\R^{2d}\to\R$ and the associated Weyl quantized operator $\op(a)$, 
the quantum expectation value 
$\left\langle \psi,\op(a)\psi \right\rangle_{L^2}$ of the observable $\op(a)$
is approximated as
\begin{equation}\label{eq:second}
\int_{\R^d}\overline{\psi}(x)\, \op(a)\psi(x)\, \d x = \int_{\R^{2d}} a(z) \mu_\psi(z) \d z + O(\eps^2)
\end{equation}
for $\eps\to0$, where the error term depends on fourth and higher order derivatives of the function~$a$; 
see Theorem~\ref{thm:mu} later on. Phase space approximations of
quantum expectations and Wigner functions play a central role in the  analysis of quantum systems,
in particular in the semiclassical regime; 
see~\cite[\S IV]{LP93} or \cite[\S 7.1]{GMMP}.
The idea of combining different spectrograms can also be found  in the time-frequency literature; see e.g.~\cite{LPH94}
and the references given therein. 
However, the goal of~\cite{LPH94}
is cross-entropy optimization within a chosen set of spectrograms and not
the approximation of Wigner functions or quantum expectations.

The second order accuracy with respect to $\eps$ in the expectation value approximation suggests 
using the new density in the context of molecular quantum dynamics. We consider the time-dependent Schr\"odinger equation
\begin{equation*}
\i \eps \partial_t \psi(t)= \left(-\tfrac{\eps^2}{2}\Delta + V\right)\psi(t), \quad \psi(0) = \psi_0,
\end{equation*}
with a smooth potential function $V:\R^d\to\R$ as provided by the time-dependent Born--Oppenheimer 
approximation. 
Let us denote by $\Phi_t:\R^{2d}\to\R^{2d}$ the flow of the corresponding classical equations of motion
\begin{equation*}
\dot q = p,\qquad \dot p = -\nabla V(q).
\end{equation*}  
Then, by Egorov's theorem, we have 
\begin{equation}
\label{eq:egorov_wigner}
\left\langle \psi(t) ,\op(a)  \psi(t)\right\rangle_{L^2} =  
\int_\Rdd (a\circ\Phi_t)(z) \W_{\psi_0}(z) \d z+ O(\eps^2)
\end{equation}
as $\eps\to0$, where the error depends on third and higher order derivatives of the functions $a\circ\Phi_t$ and $V$ as well as the $L^2$-norm of the initial wave function~$\psi_0$. The Egorov approximation is computationally 
advantageous, in particular in high dimensions, since it allows to simulate the time-evolution of quantum expectations by an ensemble of classical trajectories. Over decades, it has been widely used in 
the physical chemistry literature under the name linearized semiclassical initial value representation (LSC-IVR) or Wigner phase space method.

Our new phase space density comes into play here, since the combination of the approximations in \eqref{eq:second} and \eqref{eq:egorov_wigner} provides
\begin{equation*}
\left\langle \psi(t) ,\op(a)  \psi(t)\right\rangle_{L^2} =  
\int_\Rdd (a\circ\Phi_t)(z) \mu_{\psi_0}(z) \d z+ O(\eps^2),
\end{equation*}
which can be read as a new method for the computation of time-evolved quantum expectations by initial sampling 
from a difference of nonnegative phase space distributions; see Theorem~\ref{thm:prop_husimi} and the numerical experiments in \S\ref{sec:numerics}.

\subsection{Outline}
Our investigation proceeds along the following lines. In \S\ref{sec:psd} we briefly review phase space distributions as the Wigner function, spectrograms, and the Husimi function. \S\ref{sec:new} derives the new phase space density $\mu_\psi$ and proves our main result, that is, the second order approximation of expectation values by the phase space integration with respect to the new density function. \S\ref{sec:quantum_dynamics} applies this result to the quantum propagation of 
expectation values. Then, several explicit formulas for the density $\mu_\psi$ are derived in \S\ref{sec:examples}. The numerical experiments in \S\ref{sec:numerics} illustrate the applicability of the new approach for the dynamics of molecular quantum systems in dimensions $d=1$, $d=2$, and $d=32$.
Appendix~\ref{sec:gamma_sampling} presents a sampling strategy for the density $\mu_\psi$ via the Gamma distribution used for the numerical experiments, 
while Appendix~\ref{app:num} provides further computational details.

\section{Phase space distributions}\label{sec:psd}
In this section we review different possibilities for representing a square integrable function $\psi\in L^2(\Rd)$ via real-valued distributions on the classical phase space 
$T^*\R^d \cong\Rdd$. Considering functions with frequencies of the order $1/\eps$ for a small parameter $0<\eps\ll1$, we work with the $\eps$-rescaled Fourier transform
$$
\mathcal{F}_\eps\psi(p) \defeq (2 \pi\eps)^{-d/2} \int_\Rd \psi(q) \e^{-\i p\cdot q/\eps}\d q,\qquad p\in\R^d.
$$

We also use the Heisenberg--Weyl operator in $\eps$-scaling:

\begin{definition}
 The {\em Heisenberg--Weyl operator} associated with a phase space point $z=(q,p)\in\Rdd$ is defined as 
\begin{equation*}
T_z \psi \defeq \e^{\i p\cdot(\bullet-q/2)/\eps}\psi(\bullet-q),\qquad\psi\in L^2(\R^d).
\end{equation*}
\end{definition}

Among its many striking properties, the following two will be important for us later on. We have 
\begin{equation*}
  T_z^\dagger = T_{-z},\qquad z\in\Rdd,
\end{equation*}
and
\begin{equation*}T_{z_{1}} T_{z_{2}} = \exp\parentheses{
    -\tfrac{\i}{2\eps} \Omega(z_{1}, z_{2})
  }\, T_{z_{1} + z_{2}},\qquad z_1,z_2\in\Rdd,
\end{equation*}
where $\Omega:\Rdd\times\Rdd\to\R$ denotes the standard symplectic form on $\Rdd$, i.e.,
\begin{equation}\label{eq:sympl_form}
  \Omega(z_{1}, z_{2}) \defeq z_{1}^{T} J z_{2} = q_{1}^T p_{2} - p_{1}^T q_{2}
  \qquad\text{with}
  \qquad
  J =
  \begin{bmatrix}
    0 & \Id \\
    -\Id & 0
  \end{bmatrix}.
\end{equation}

All phase space distributions considered here turn the action of the Heisenberg--Weyl operator $T_z$ on a wave function into a phase space translation by $z$, which is often referred to as a covariance property; see, e.g., \eqref{eq:spectrogram-translated} below.

\begin{remark}
  The Gaussian wave packet~\eqref{eq:gauss} with its phase space center at $z_{0} \in \R^{2d}$ is obtained by applying the Heisenberg--Weyl operator $T_{z_{0}}$ to the Gaussian
  \begin{equation*}
    g_{0}(x) \defeq (\pi\eps)^{-d/4} \exp\!\left(-\tfrac{1}{2\eps}|x|^2 \right)
  \end{equation*}
  centered at the origin, i.e., $g_{z_{0}} = T_{z_{0}} g_{0}$.
\end{remark}

\subsection{Wigner functions} We start our discussion with the celebrated Wigner function and recapitulate some basic relations.
\begin{definition}
The {\em Wigner function} of a function $\psi\in L^2(\R^d)$ is defined as $\W_\psi:\Rdd\to\R$, 
\begin{equation*}
\W_\psi(z) \defeq (2\pi\eps)^{-d} \int_{\R^d} \overline\psi(q+\tfrac{y}{2}) \psi(q-\tfrac{y}{2}) \e^{\i y\cdot p/\eps} \d y,\qquad z = (q, p) \in \R^{2d}.
\end{equation*}
\end{definition}

Wigner functions are continuous square-integrable functions on phase space; however, they need not be integrable. 
The marginals are the position and momentum density of the state, respectively. With a proper interpretation of the possibly not absolutely convergent integrals this means
$$
\int_\Rd  \W_\psi(q,p)\d p = |\psi(q)|^2, \qquad \int_\Rd  \W_\psi(q,p)\d q = |\mathcal{F}_\eps\psi(p)|^2,
$$
and in particular
\begin{equation*}
\int_{\R^{2d}} \W_\psi(z) \d z = \|\psi\|^2.
\end{equation*}
Wigner transformation preserves orthogonality in the sense that
\[
\int_{\Rdd} \W_\psi(z) \W_\phi(z) \d z = (2\pi\eps)^{-d} \left|\langle\psi,\phi\rangle\right|,\qquad \psi,\phi\in L^2(\R^d),
\]
and it turns the action of the Heisenberg--Weyl operator into a phase space translation, i.e.,
\begin{equation*}
  \W_{T_z\psi} = \W_\psi(\bullet - z),\qquad z\in\R^{2d},
\end{equation*}
which is an example of the covariance property alluded above.

Moreover, given a Schwartz function $a:\Rdd \to \R$, one can use Wigner functions to express expectation values of
Weyl quantized linear operators
\begin{equation}\label{eq:weyl_quant}
(\op(a) \psi)(q) = (2\pi \eps)^{-d} \int_\Rdd a(\tfrac{q+y}2,p) \psi(y) \e^{\i(q-y)\cdot p/ \eps} \d y\,\d p
\end{equation}
via the weighted phase space integral
\begin{equation}\label{eq:weyl-wigner}
\left\langle \psi , \op(a) \psi\right\rangle = \int_\Rdd a(z) \W_\psi(z) \d z.
\end{equation}
We note that the oscillatory integral formula~(\ref{eq:weyl_quant}) can be extended to more general classes of symbols $a:\Rdd \to \R$
with controlled growth properties at infinity; see for example~\cite[\S4]{Z12}, \cite[\S2]{M02} or \cite[\S2]{F89}. 

\subsection{Spectrograms}
Except for Gaussian states, Wigner functions attain negative values (see \cite{SC83}), and thus cannot be treated as
probability densities. For example, any odd function $\psi\in L^2(\R^d)$ satisfies
$$
\W_\psi(0) = -(2\pi\eps)^{-d}\int_{\R^d}|\psi(\tfrac{y}{2})|^2\, \d y \le 0.
$$
One way to obtain  nonnegative phase space representations of a quantum state  is to convolve its Wigner function
with another Wigner function.

\begin{definition}
Let $\psi \in L^2(\Rd)$ and $\phi \in \Sc(\Rd)$. Then, $\W_\psi * \W_\phi$ is called a \emph{spectrogram} of $\psi$.
\end{definition}

In time-frequency analysis, spectrograms are typically introduced as the modulus squared of a short-time Fourier transform (see, e.g., the introduction in~\cite{Flandrin15}) so that the representation via the convolution of two Wigner transforms is derived subsequently. Spectrograms also form a sub-class of Cohen's class of phase space distributions \cite[\S3.2.1.]{F99}. They satisfy
\begin{equation}\label{eq:spec_square}
(\W_\psi*\W_{\phi})(z) = (2\pi\eps)^{-d} \left|\langle T_z\phi_-,\psi\rangle\right|^2,\qquad z\in\R^{2d},
\end{equation}
where $\phi_-(x) \defeq \phi(-x)$ for $x\in\R^d$; see also \cite[Proposition 1.99]{F89}. Thus, spectrograms are nonnegative  and smooth by construction. 
The integrability follows from~\eqref{eq:spec_square} by 
the square integrability of general Fourier-Wigner transforms $z \mapsto \langle T_z\phi,\psi\rangle$ with $\phi,\psi\in L^2(\Rd)$, see Proposition 1.42  in \cite{F89}.
Normalization is preserved according to 
\begin{equation}
  \label{eq:integral_of_spectrogram}
  \int_{\R^{2d}} (\W_\psi*\W_\phi)(z) \d z = \|\psi\|^2 \cdot \|\phi\|^2.
\end{equation}
Spectrograms also inherit the covariance property from the Wigner function, i.e.,
\begin{equation}
\label{eq:spectrogram-translated}
\W_{T_z\psi}*\W_{\phi} = (\W_\psi*\W_{\phi})(\bullet - z),\qquad z\in\R^{2d}.
\end{equation}
A particular spectrogram is obtained by convolving with the Wigner function of a Gaussian wave packet, which we will discuss next.

\subsection{Husimi functions}
The most commonly used nonnegative phase space 
distribution is the Husimi function; see e.g.~\cite[\S4.1]{AMP09}. 
We consider the Wigner function
\begin{equation*}
  \W_{g_{0}}(z)=(\pi\eps)^{-d}\e^{-|z|^2/\eps}, \qquad z\in\Rdd,
\end{equation*}
of the Gaussian wave packet $g_{0}$ centered at the phase space origin and define:

\begin{definition}
The {\em Husimi function} of $\psi\in L^2(\R^d)$ is defined as the spectrogram
\begin{equation*}
  \H_\psi(z) \defeq (\W_\psi * \W_{g_0})(z) = \int_{\Rdd} \W_{\psi}(w)\,(\pi\eps)^{-d}\e^{-|z-w|^2/\eps}\,\d{w}.
\end{equation*}
\end{definition}

The Husimi function of $\psi$ is the spectrogram \eqref{eq:spec_square} with $\phi$ being the Gaussian wave packet $g_{0}$, and since $g_{0}$ is even, we have
$$
\H_\psi(z) =  (2\pi\eps)^{-d} \left| \langle T_z g_0,\psi\rangle\right|^2,\qquad z\in\Rdd.
$$
Also, since $g_{0}$ is normalized, i.e., $\|g_0\|=1$, \eqref{eq:integral_of_spectrogram} gives
$$
\int_\Rdd \H_\psi(z) \d z = \|\psi\|^2.
$$
That is, for $\psi\in L^2(\Rd)$ with $\|\psi\|=1$, the Husimi function is a smooth probability density on phase space. 

\begin{remark}
The Husimi function of $\psi\in L^2(\R^d)$ is the modulus squared of the so-called Fourier--Bros--Iagolnitzer (FBI) transform, which associates with $\psi$ the mapping
\[
\Rdd\to\C,\quad z\mapsto (2\pi\eps)^{-d/2} \langle T_z g_0,\psi\rangle.
\]
In contrast to the Wigner function and the spectrograms, the FBI transform is a linear, albeit complex-valued phase space representation; see \cite[Chapter~3]{M02}. 
\end{remark}

Integrating a Schwartz function $a:\Rdd\to\R$ against the Husimi function, we obtain
\begin{align*}
\int_\Rdd a(z) \H_\psi(z) \d z &= \int_\Rdd a(z) (\W_\psi*\W_{g_0})(z) \d z
= \int_\Rdd (a*\W_{g_0})(z) \W_\psi(z) \d z\\
&= \left\langle\psi,\op(a*\W_{g_0})\psi\right\rangle,
\end{align*}
where the last equation uses the duality relation~(\ref{eq:weyl-wigner}) between Weyl quantized operators and the Wigner transform. Therefore,
\begin{equation*}
  \int_\Rdd a(z) \H_\psi(z) \d z = \langle\psi,\op_{\rm aw}(a)\psi\rangle,
\end{equation*}
where 
$$
\op_{\rm aw}(a) \defeq \op(a*\W_{g_0}),\qquad a\in\Sc(\Rdd),
$$
denotes the anti-Wick quantized operator of the function $a$; see for example \cite[\S2.7]{F89} and \cite[\S11.4]{G11}. Weyl and anti-Wick quantization are $\eps$-close in the following sense:

\begin{lemma}\label{lem:star_to_weyl}
Let $a:\Rdd\to \R$ be a Schwartz function and $\eps>0$. Then, there are two families of Schwartz functions $r^\eps_1,r^\eps_2:\Rdd\to \R$ that depend on fourth and higher order derivatives of $a$, so that
\begin{align*}
\op_{\rm aw}(a) &= \op(a + \tfrac\eps4 \Delta a) + \eps^2\op(r^\eps_1),\\
\op_{\rm aw}(a - \tfrac\eps4 \Delta a) &= \op(a) + \eps^2 \op(r^\eps_2),
\end{align*}
where $\sup_{\eps>0}\| \op(r^\eps_j)\|_{L^2\to L^2}<\infty$ for both $j=1,2$. 
\end{lemma}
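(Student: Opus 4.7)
The plan is to reduce everything to a Taylor expansion of the convolution kernel. By definition, $\op_{\rm aw}(a) = \op(a*\W_{g_0})$, so the task is to show $a*\W_{g_0} = a + \tfrac{\eps}{4}\Delta a + \eps^2 r_1^\eps$ as a pointwise identity, where $r_1^\eps$ lies in a bounded set of Schwartz functions uniformly in $\eps$. The second identity then follows by applying the first to $a - \tfrac{\eps}{4}\Delta a$ and noticing that the leading-order correction is cancelled, leaving $\op_{\rm aw}(a-\tfrac{\eps}{4}\Delta a) = \op(a) - \tfrac{\eps^2}{16}\op(\Delta^2 a) + \eps^2\op(\tilde r^\eps)$, which has the desired form with $r_2^\eps \defeq -\tfrac{1}{16}\Delta^2 a + \tilde r^\eps$.

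For the first identity, I expand $a(z-w)$ around $z$ via Taylor's theorem with integral remainder up to order four, and integrate against the centered Gaussian $\W_{g_0}(w) = (\pi\eps)^{-d}\e^{-|w|^2/\eps}$. By symmetry, all odd moments of $\W_{g_0}$ vanish, so the terms of orders $|\alpha| = 1$ and $|\alpha| = 3$ disappear. The zeroth-order term reproduces $a(z)$. The second-order term contributes $\tfrac{1}{2}\sum_{i,j}\partial_{ij}a(z)\int w_i w_j \W_{g_0}(w)\,\d w$; using $\int w_i w_j \W_{g_0}(w)\,\d w = \tfrac{\eps}{2}\delta_{ij}$, this is exactly $\tfrac{\eps}{4}\Delta a(z)$. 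The remainder collects integrals of products of four components $w_{\alpha}$ weighted by $\W_{g_0}$ and $\partial^\alpha a(z-tw)$, which after the rescaling $w = \sqrt{\eps}\,u$ becomes
\[
r_1^\eps(z) = \sum_{|\alpha|=4} \frac{4}{\alpha!} \int_{\R^d} (-u)^\alpha \pi^{-d}\e^{-|u|^2} \int_0^1 (1-t)^3\, \partial^\alpha a(z-t\sqrt{\eps}\,u)\,\d t\,\d u,
\]
so the $\eps^2$ factor comes out cleanly and $r_1^\eps$ involves only derivatives of $a$ of order four and higher.

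The final step is to show $\sup_{\eps>0}\|\op(r_j^\eps)\|_{L^2\to L^2} < \infty$ for $j=1,2$. Since $a$ is Schwartz, the scaled convolution defining $r_1^\eps$ depends continuously on $\sqrt{\eps}\ge 0$ in every Schwartz seminorm of $z$, so $\{r_1^\eps\}_{\eps\in(0,1]}$ is a bounded subset of $\Sc(\R^{2d})$ (and similarly for $\tilde r^\eps$, hence for $r_2^\eps$). The Calder\'on--Vaillancourt theorem in its $\eps$-scaled version (see \cite[\S 2]{M02} or \cite[\S 4]{Z12}) then bounds $\|\op(r_j^\eps)\|_{L^2\to L^2}$ in terms of finitely many Schwartz seminorms of $r_j^\eps$, uniformly in $\eps$.

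The main obstacle is the uniform-in-$\eps$ operator norm estimate: naively the remainder $r_j^\eps$ could blow up as $\eps\to 0$ because of the rescaling $w=\sqrt{\eps}u$ or because $\Delta a$ itself is differentiated further. The point that makes this work is that the rescaling leaves $\partial^\alpha a(z-t\sqrt\eps u)$ uniformly Schwartz in $z$ for $\eps\in[0,1]$, while the polynomial prefactor $u^\alpha$ is absorbed by the Gaussian weight $\e^{-|u|^2}$. This is exactly the structure needed to apply the semiclassical Calder\'on--Vaillancourt bound and conclude the proof.
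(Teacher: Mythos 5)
Your proof is correct and follows essentially the same route as the paper: Taylor expansion of the convolution $a*\W_{g_0}$ around $z$, cancellation of odd-order terms by the symmetry of the Gaussian, identification of the second-order moment with $\tfrac{\eps}{4}\Delta a$, and the Calder\'on--Vaillancourt theorem for the uniform operator bound on the remainder. The only differences are organizational (you prove the first identity and deduce the second, the paper sketches the second directly) and that you make the integral remainder and the rescaling $w=\sqrt{\eps}\,u$ explicit, which the paper leaves implicit; note the trivial typo that your remainder integral should be over $\R^{2d}$, not $\R^{d}$.
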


\begin{proof}
The lemma is essentially proven in~\cite[Proposition~2.4.3]{L10} or \cite[Lemma 1]{KL13}, and hence we only sketch
the proof for the second of the two equivalent identities. We write out the definition
\begin{align*}
\op_{\rm aw}(a-\tfrac{\eps}{4}\Delta a) &= \op\left(\W_{g_0} * (a-\tfrac\eps4\Delta a)\right)
\end{align*}
and Taylor expand $a-\tfrac\eps4\Delta a$ around $z$ in the integral
\[
\W_{g_0} * (a-\tfrac\eps4\Delta a) = (\pi \eps)^{-d} \int_\Rdd (a-\tfrac\eps4\Delta a)(\zeta) \e^{-|z-\zeta|^2/\eps} \d\zeta.
\]
Due to the symmetry of the Gaussian, all Taylor expansion terms with odd derivatives of  $(a-\tfrac\eps4\Delta a)$ vanish.
The computation
\[
(\pi \eps)^{-d}  \sum_{|\alpha|=1}\int_\Rdd \frac1{(2\alpha)!} (\partial^{2\alpha}(a-\tfrac\eps4\Delta a)) (\zeta - z)^{2\alpha} \e^{-|z-\zeta|^2/\eps} \d\zeta = \frac{\eps}{4} \Delta(a-\tfrac\eps4\Delta a)
\]
implies the second order approximation
\begin{align*}
\W_{g_0} * (a-\tfrac\eps4\Delta a) &= (a-\tfrac\eps4\Delta a) + \tfrac\eps{4} \Delta (a-\tfrac\eps4\Delta a)  + O(\eps^2)\\
&= a + O(\eps^2),
\end{align*}
where the $O(\eps^2)$ term
is of Schwartz class. Applying the Calder\'on--Vaillancourt Theorem (see, e.g., \cite[\S2.5]{F89})
concludes the proof.
\end{proof}

\begin{remark}
The result of Lemma~\ref{lem:star_to_weyl} can formally be read in terms of the heat semigroup $\{\exp( t \Delta)\}_{t\geq 0}$ as 
\begin{equation*}
a*\W_{g_0} = \exp( \tfrac{\eps}4  \Delta) a = a + \tfrac\eps4 \Delta a + O(\eps^2),
\end{equation*}
where $a:\Rdd\to\R$ is a Schwartz function.
\end{remark}

\section{The new phase space density}\label{sec:new}

We learn from the preceding discussion of phase space distributions, in particular from Lemma~\ref{lem:star_to_weyl}, that the Husimi function allows to approximate an expectation value according to
\begin{equation*}
  \left\langle \psi , \op(a) \psi\right\rangle = \int_\Rdd (a-\tfrac\eps4\Delta a)(z) \H_\psi(z) \d z + O(\eps^2)
\end{equation*}
as $\eps\to0$, where the error depends on the fourth and higher order derivatives of $a$ and the $L^2$-norm of $\psi$. 
An integration by parts provides
\begin{equation*}
  \left\langle \psi , \op(a) \psi\right\rangle = \int_\Rdd a(z)\,(\H_\psi -\tfrac\eps4\Delta\H_{\psi})(z)\,\d z + O(\eps^2),
\end{equation*}
and motivates us to define the following new phase space density.

\begin{definition}\label{def:mu}
For $\psi \in L^2(\Rd)$ we define the phase space density $\mu_\psi:\Rdd \to \R$, 
\begin{align*}
  \mu_\psi &\defeq  \H_\psi - \tfrac\eps4 \Delta \H_\psi.
\end{align*}
\end{definition}

We summarize the key property of the new density as follows: 

\begin{theorem}
\label{thm:mu}
Let $a:\Rdd\to\R$ be a Schwartz function. Then, there exists a constant $C \ge0$ depending on fourth and higher order derivatives of $a$ such that for all $\psi\in L^2(\Rd)$ 
$$
\left| \langle\psi,\op(a)\psi\rangle - \int_\Rdd a(z) \mu_\psi(z) \d z \right| \le C\eps^2 \|\psi\|^2,
$$
where the density $\mu_\psi:\Rdd \to \R$ was defined in Definition~\ref{def:mu}.
\end{theorem}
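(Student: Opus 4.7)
The plan is to reduce the statement to an application of Lemma~\ref{lem:star_to_weyl}, using integration by parts to bring $\mu_\psi$ into the form $a \ast \W_{g_0}$ paired against $\W_\psi$, which is the anti-Wick picture. First I would transfer the Laplacian in the definition of $\mu_\psi$ onto the symbol $a$: writing
\[
\int_{\Rdd} a(z)\,\mu_\psi(z)\,\d z = \int_{\Rdd} a(z)\bigl(\H_\psi(z) - \tfrac{\eps}{4}\Delta \H_\psi(z)\bigr)\d z = \int_{\Rdd}\bigl(a - \tfrac{\eps}{4}\Delta a\bigr)(z)\,\H_\psi(z)\,\d z.
\]
This integration by parts has no boundary contributions because $a\in\Sc(\Rdd)$ is rapidly decreasing while $\H_\psi$ is bounded by $(2\pi\eps)^{-d}\|\psi\|^2$ via Cauchy--Schwarz applied to~\eqref{eq:spec_square}, and the same bound extends to its derivatives since $T_z g_0$ is smooth in $z$.

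Next I would invoke the identification of $\H_\psi$-integrals with anti-Wick quantization. Since $\H_\psi = \W_\psi \ast \W_{g_0}$, the computation performed in \S\ref{sec:psd} gives
\[
\int_{\Rdd}\bigl(a - \tfrac{\eps}{4}\Delta a\bigr)(z)\,\H_\psi(z)\,\d z = \bigl\langle\psi,\op_{\rm aw}(a - \tfrac{\eps}{4}\Delta a)\psi\bigr\rangle.
\]
At this point Lemma~\ref{lem:star_to_weyl} directly applies with $r^\eps_2$ the corresponding Schwartz remainder, giving
\[
\op_{\rm aw}(a - \tfrac{\eps}{4}\Delta a) = \op(a) + \eps^2\,\op(r^\eps_2),
\]
where $r^\eps_2$ depends only on fourth- and higher-order derivatives of $a$, and where $\sup_{\eps>0}\|\op(r^\eps_2)\|_{L^2\to L^2}<\infty$ by the Calder\'on--Vaillancourt bound already invoked in the proof of that lemma.

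Putting the two steps together yields
\[
\int_{\Rdd} a(z)\,\mu_\psi(z)\,\d z = \langle\psi,\op(a)\psi\rangle + \eps^2\,\langle\psi,\op(r^\eps_2)\psi\rangle,
\]
and the Cauchy--Schwarz inequality combined with the uniform operator-norm bound on $\op(r^\eps_2)$ produces the desired constant $C = \sup_{\eps>0}\|\op(r^\eps_2)\|_{L^2\to L^2}$, which depends only on finitely many fourth- and higher-order derivatives of $a$. I do not anticipate a genuine obstacle here: all nontrivial analytic work, including the delicate Taylor expansion of the heat-type convolution and the Calder\'on--Vaillancourt step, is already packaged inside Lemma~\ref{lem:star_to_weyl}. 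The only thing to handle cleanly is the justification of the integration by parts, which is immediate from the Schwartz decay of $a$ and the pointwise $L^\infty$-control of $\H_\psi$ and its derivatives.
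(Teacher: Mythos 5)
Your proposal is correct and follows essentially the same route as the paper: the second identity of Lemma~\ref{lem:star_to_weyl} (equivalently, the anti-Wick identification of the Husimi integral), the integration by parts transferring $\tfrac{\eps}{4}\Delta$ between $a$ and $\H_\psi$ justified by the boundedness of $\H_\psi$ and its derivatives, and the uniform Calder\'on--Vaillancourt bound on the remainder. The only difference is the order in which the two steps are presented, which is immaterial.
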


\begin{proof}
By Lemma~\ref{lem:star_to_weyl}, we have
$$
\langle\psi,\op(a)\psi\rangle = \int_\Rdd (a-\tfrac\eps4\Delta a)(z) \H_\psi(z) \d z + \eps^2 \langle\psi,\op(r^\eps)\psi\rangle
$$
with
$$
|\langle\psi,\op(r^\eps)\psi\rangle| \;\le\; C \|\psi\|^2,
$$
where the constant $C>0$ depends on the fourth and higher order derivatives of $a$.
The Husimi function $\H_\psi$ is smooth and bounded since
  \begin{equation*}
    \H_\psi(z) =  (2\pi\eps)^{-d} \left| \langle g_z,\psi\rangle\right|^2 
    \le (2\pi\eps)^{-d} \norm{g_{z}}^{2} \norm{\psi}^{2}
    = (2\pi\eps)^{-d} \norm{\psi}^{2}.
  \end{equation*}
Hence, integration by parts implies
$$
\int_\Rdd (a-\tfrac\eps4\Delta a)(z) \H_\psi(z) \d z = \int_\Rdd a(z) (\H_\psi-\tfrac\eps4\Delta\H_\psi)(z) \d z = \int_\Rdd a(z) \mu_\psi(z) \d z.
$$
Therefore, 
$$
\left|\langle\psi,\op(a)\psi\rangle - \int_\Rdd a(z) \mu_\psi(z) \d z\right| = \eps^2| \langle\psi,\op(r^\eps)\psi\rangle| \;\le\; C\,\eps^2\|\psi\|^2,
$$
which concludes the proof.
\end{proof}

\begin{remark}
Whenever $a:\Rdd \to \R$ is a polynomial of degree less than or equal to three,
the constant $C\ge0$ of Theorem~\ref{thm:mu} vanishes so that the phase space integration with respect to 
$\mu_\psi$ exactly reproduces the expectation value. Moreover, using higher order Hermite spectrograms, one can also construct
densities which yield approximations of expectation values with higher order errors in~$\eps$, or, equivalently,
which are exact for polynomial symbols $a:\Rdd \to \R$ of higher degree. We refer to the thesis~\cite[\S10.5]{K15}
for the next order result and an outline on how to prove higher order approximations.
\end{remark}

Our next aim is to derive an alternative expression for the new density showing that it is a linear combination of spectrograms. 

\subsection{The new density in terms of Hermite functions}

The Laplacian of the Husimi function can be related to the Wigner function of the $\eps$-rescaled first order Hermite functions as follows: 
\begin{proposition}\label{lem:laplace} Let $\eps>0$, $j \in \{1,\ldots,d \} $, and
$$
\varphi_{e_j}(x) \defeq (\pi\eps)^{-d/4} \sqrt{\tfrac{2}{\eps}}x_j \exp\!\left(-\tfrac{1}{2\eps}|x|^2\right),\qquad x\in\R^d,
$$
be the first order Hermite functions. Then, for all $\psi\in L^2(\Rd)$,
$$
\Delta\H_\psi = \tfrac2\eps \sum_{j=1}^d \W_\psi*\W_{\varphi_{e_j}} - \tfrac{2d}{\eps}\H_{\psi}
$$
and consequently
\begin{equation*}
\mu_\psi = \H_\psi - \tfrac\eps4 \Delta \H_\psi = (1+\tfrac{d}{2})\H_\psi - \tfrac12\sum_{j=1}^d \W_{\psi}*\W_{\varphi_{e_j}}.
\end{equation*}
\end{proposition}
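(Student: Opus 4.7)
The strategy is to reduce the identity to a pointwise statement about Gaussians on $\Rdd$. Writing $\H_\psi = \W_\psi * \W_{g_0}$ and exploiting that convolution commutes with differentiation, I can pull the Laplacian onto the Gaussian factor, so that $\Delta \H_\psi = \W_\psi * \Delta \W_{g_0}$. It thus suffices to express $\Delta \W_{g_0}$ as a linear combination of $\W_{g_0}$ and the first-order Hermite Wigner functions $\W_{\varphi_{e_j}}$; convolution with $\W_\psi$ then yields the claim after relabeling $\W_\psi *\W_{g_0}=\H_\psi$.

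The first ingredient is a direct computation of $\Delta \W_{g_0}$. Since $\W_{g_0}(z) = (\pi\eps)^{-d} \e^{-|z|^2/\eps}$ is a Gaussian on $\R^{2d}$, two elementary derivatives give
\[
\Delta \W_{g_0}(z) = \left(\tfrac{4|z|^2}{\eps^2} - \tfrac{4d}{\eps}\right) \W_{g_0}(z).
\]
The second ingredient is a formula for $\W_{\varphi_{e_j}}$. I would use that $\varphi_{e_j} = \sqrt{2/\eps}\,x_j g_0$, so that expanding the defining integral of $\W_{\varphi_{e_j}}$ produces the factor $(q_j+y_j/2)(q_j-y_j/2) = q_j^2 - y_j^2/4$. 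The $q_j^2$ piece returns $q_j^2 \W_{g_0}$, while the $y_j^2$ piece is handled by the identity $y_j^2 \e^{\i y\cdot p/\eps} = -\eps^2 \partial_{p_j}^2 \e^{\i y\cdot p/\eps}$, turning it into $\tfrac{\eps^2}{4} \partial_{p_j}^2\W_{g_0}$. Explicit differentiation of $\W_{g_0}$ in $p_j$ then delivers
\[
\W_{\varphi_{e_j}}(z) = \tfrac{2}{\eps}\bigl(q_j^2 + p_j^2 - \tfrac{\eps}{2}\bigr)\W_{g_0}(z).
\]
Summing over $j=1,\dots,d$ and comparing with the formula for $\Delta \W_{g_0}$ yields the pointwise identity
\[
\Delta \W_{g_0} = \tfrac{2}{\eps}\sum_{j=1}^d \W_{\varphi_{e_j}} - \tfrac{2d}{\eps}\W_{g_0},
\]
and convolving both sides with $\W_\psi$ gives the first displayed equation of the proposition.

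For the second identity I would simply substitute the just-derived formula into Definition~\ref{def:mu}: the term $-\tfrac{\eps}{4}\Delta\H_\psi$ becomes $\tfrac{d}{2}\H_\psi - \tfrac{1}{2}\sum_{j=1}^d \W_\psi*\W_{\varphi_{e_j}}$, which combined with the leading $\H_\psi$ produces the stated combination. There is no real analytical obstacle here; the main thing to be careful about is the bookkeeping of $\eps$-prefactors and remembering that the Laplacian is the $2d$-dimensional one, so that the sum over position and momentum coordinates each contributes $d$ to the $-\tfrac{4d}{\eps}$ term. An alternative route would be to realize $\varphi_{e_j}$ via the $\eps$-scaled creation operator acting on $g_0$ and to use the known action of the creation operator on Wigner functions; this is more conceptual but requires importing extra machinery, whereas the direct Gaussian computation is self-contained and short.
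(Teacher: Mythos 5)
Your proposal is correct and follows essentially the same route as the paper: pull the Laplacian onto $\W_{g_0}$ via $\Delta\H_\psi=\W_\psi*\Delta\W_{g_0}$, verify the pointwise identity $\Delta\W_{g_0}=\tfrac2\eps\sum_j\W_{\varphi_{e_j}}-\tfrac{2d}\eps\W_{g_0}$ using the explicit Gaussian formulas, and convolve. Your direct computation of $\W_{\varphi_{e_j}}(z)=\tfrac2\eps\bigl(q_j^2+p_j^2-\tfrac\eps2\bigr)\W_{g_0}(z)$ agrees with the formula the paper quotes from the literature, so nothing is missing.
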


\begin{proof}
Let $z\in\R^{2d}$. We compute
\begin{align*}
\Delta \W_{g_0}(z) &= (\pi\eps)^{-d} \Delta \e^{-|z|^2/\eps} = (\pi\eps)^{-d}\, \nabla\cdot\left(-\tfrac2\eps z \,\e^{-|z|^2/\eps}\right)\\
&= (\pi\eps)^{-d} \left(-\tfrac{4d}{\eps} + \tfrac{4}{\eps^2}|z|^2\right) \e^{-|z|^2/\eps}.
\end{align*}
Moreover, by direct computation or \cite[Theorem~1]{LT14},
\begin{equation*}
  \W_{\varphi_{e_j}}(z) = -(\pi\eps)^{-d} \left(1-\tfrac{2}{\eps}|z_j|^2\right)\e^{-|z|^2/\eps} ,
\end{equation*}
such that
\begin{align*}
\sum_{j=1}^d \W_{\varphi_{e_j}}(z) &=
-(\pi\eps)^{-d} \left (2d-d-\tfrac{2}{\eps}|z|^2\right) \e^{-|z|^2/\eps} \\
&=
-(\pi\eps)^{-d}  \left (2d-\tfrac{2}{\eps}|z|^2\right)  \e^{-|z|^2/\eps}+ d \cdot \W_{g_0}(z)
\end{align*}
and
$$
\Delta \W_{g_0}(z) = -\tfrac2\eps (\pi\eps)^{-d} \left(2d - \tfrac{2}{\eps}|z|^2\right) \e^{-|z|^2/\eps} = \tfrac2\eps\sum_{j=1}^d \W_{\varphi_{e_j}}(z) - \tfrac{2d}{\eps}\,\W_{g_0}(z).
$$
To conclude the proof we note that $\Delta \H_\psi = \Delta (\W_\psi*\W_{g_0}) = \W_\psi*\Delta \W_{g_0}$.
\end{proof}

\begin{remark}
For any $\psi\in L^2(\R^d)$, the real-valued density $\mu_\psi$ is a weighted sum of spectrograms and therefore smooth and integrable. It satisfies the normalization condition
\[
\int_{\Rdd} \mu_\psi(z) \d z = \|\psi\|^2
\]
and the covariance property
\[
\mu_{T_z\psi} = \mu_\psi(\bullet-z),\qquad z\in\R^{2d}.
\]
\end{remark}

Next, we add a further characterization of the new density that does not explicitly require a convolution. 

\subsection{The new density in terms of ladder operators}
The Gaussian wave packet $g_0$ centered at the origin and the first order Hermite functions $\varphi_{e_1},\ldots,\varphi_{e_d}$ can be characterized by the raising and lowering operators
\begin{equation*}
A^\dagger = \tfrac{1}{\sqrt{2\eps}}(x-\eps\nabla_x)\quad\text{and}\quad 
A = \tfrac{1}{\sqrt{2\eps}}(x+\eps\nabla_x),
\end{equation*}
respectively. On the one hand, we have 
\[
{\rm span}\{g_0\} = \left\{\psi\in L^2(\R^d)\mid A_j\psi = 0 \;\text{for all}\;j=1,\ldots,d\right\},
\] 
for the kernel of the lowering operator $A=(A_1,\ldots,A_d)$. On the other hand, the components of the raising operator applied to the Gaussian wave packet $g_0$ generate the first order Hermite functions in the sense that
\[
\varphi_{e_j} = A_j^\dagger g_0,\qquad j=1,\ldots,d.
\]

\begin{proposition}\label{lem:FBI} 
For all $\psi\in L^2(\R^d)$, $j=1,\ldots,d$, and $z=(q,p)\in \R^{2d}$, we have
\begin{align*}
  (\W_\psi*\W_{\varphi_{e_j}})(z)
  &= (2\pi\eps)^{-d} \; | \langle T_z A_j^\dagger g_0,\psi\rangle|^2\\
  & = (2\pi\eps)^{-d} \left| \left\langle g_z,\left(A_j-\tfrac{1}{\sqrt{2\eps}}z_j^\C\right)\psi\right\rangle\right|^2,
\end{align*}
where $z^\C\defeq q+\i p\in\C^d$ and $g_z$ is the Gaussian wave packet defined in \eqref{eq:gauss}. Consequently, 
\begin{align*}
\mu_\psi(z)
&= (2\pi\eps)^{-d} \left(   (1+\tfrac{d}2) \big| \langle T_z g_0, \psi \rangle \big|^2 -  \tfrac12  
\sum_{j=1}^d  \big| \langle T_z A_j^\dagger g_0, \psi \rangle \big|^2  \right)\\
&=  (2\pi\eps)^{-d} \left(   (1+\tfrac{d}2) \big| \langle g_z, \psi \rangle \big|^2 -  \tfrac12  \sum_{j=1}^d  \left| \left\langle g_z,\left(A_j-\tfrac{1}{\sqrt{2\eps}}z_j^\C\right)\psi\right\rangle\right|^2  \right).
\end{align*}
\end{proposition}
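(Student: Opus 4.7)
The proof naturally splits into three pieces, corresponding to the two displayed identities for $\W_\psi*\W_{\varphi_{e_j}}(z)$ and the subsequent two formulas for $\mu_\psi(z)$.

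First, for the first identity I would start from the general spectrogram formula \eqref{eq:spec_square} applied with $\phi = \varphi_{e_j}$. Since $\varphi_{e_j}$ is an odd function, one has $(\varphi_{e_j})_-=-\varphi_{e_j}$, and the minus sign is absorbed by the modulus squared. Combined with the identity $\varphi_{e_j}=A_j^\dagger g_0$, this already yields
\begin{equation*}
(\W_\psi*\W_{\varphi_{e_j}})(z) = (2\pi\eps)^{-d}\,\bigl|\langle T_z A_j^\dagger g_0,\psi\rangle\bigr|^2.
\end{equation*}

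Second, to pass to the expression involving $g_z$, I would compute the conjugation of the ladder operator by the Heisenberg--Weyl operator. Writing $A_j=\tfrac{1}{\sqrt{2\eps}}(\hat x_j+\i\hat p_j)$ with $\hat p_j = -\i\eps\partial_{x_j}$, a direct calculation from the definition of $T_z$ gives the intertwining relations $T_z\hat x_j T_z^\dagger = \hat x_j - q_j$ and $T_z\hat p_j T_z^\dagger = \hat p_j - p_j$, whence
\begin{equation*}
T_z A_j T_z^\dagger = A_j - \tfrac{1}{\sqrt{2\eps}}(q_j+\i p_j) = A_j - \tfrac{1}{\sqrt{2\eps}} z_j^\C.
\end{equation*}
Taking adjoints and using $T_z^\dagger T_z=\Id$, I rewrite
\begin{equation*}
\langle T_z A_j^\dagger g_0,\psi\rangle = \bigl\langle (T_z A_j T_z^\dagger)^\dagger T_z g_0,\psi\bigr\rangle = \bigl\langle g_z,(T_z A_j T_z^\dagger)\psi\bigr\rangle = \bigl\langle g_z,\bigl(A_j-\tfrac{1}{\sqrt{2\eps}}z_j^\C\bigr)\psi\bigr\rangle,
\end{equation*}
where I used $T_z g_0=g_z$. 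Taking the modulus squared gives the second displayed formula.

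Third, to obtain the two formulas for $\mu_\psi$, I substitute the two representations of the Hermite spectrograms into the identity
\begin{equation*}
\mu_\psi = (1+\tfrac{d}{2})\H_\psi - \tfrac12\sum_{j=1}^d \W_\psi*\W_{\varphi_{e_j}}
\end{equation*}
provided by Proposition~\ref{lem:laplace}, combined with the standard representation $\H_\psi(z)=(2\pi\eps)^{-d}|\langle T_z g_0,\psi\rangle|^2=(2\pi\eps)^{-d}|\langle g_z,\psi\rangle|^2$ already recorded in \S\ref{sec:psd}.

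The only step that requires genuine care is the intertwining computation $T_z A_j T_z^\dagger = A_j - \tfrac{1}{\sqrt{2\eps}}z_j^\C$: the signs in the conjugation of $\hat x_j$ and $\hat p_j$ by $T_z$ depend on the convention for the Heisenberg--Weyl operator, and it is important to verify them directly from the paper's definition rather than citing a standard formula. Everything else is a straightforward bookkeeping exercise combining the spectrogram identity, the oddness of $\varphi_{e_j}$, and Proposition~\ref{lem:laplace}.
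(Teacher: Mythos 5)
Your proposal is correct and follows essentially the same route as the paper: the spectrogram identity \eqref{eq:spec_square} together with the oddness of $\varphi_{e_j}$, a commutation of the ladder operator with $T_z$, and then substitution into the decomposition from Proposition~\ref{lem:laplace}. The only cosmetic difference is that you conjugate $A_j$ by $T_z$ via the intertwining relations for position and momentum and then pass to adjoints, whereas the paper computes $T_z\circ A_j^\dagger=\bigl(A_j^\dagger-\tfrac{1}{\sqrt{2\eps}}(q_j-\i p_j)\bigr)\circ T_z$ directly from $\eps\partial_j\circ T_z=\i p_j T_z+T_z\circ\eps\partial_j$; the two computations are equivalent, and your signs check out against the paper's definition of $T_z$.
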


\begin{proof}
The relation \eqref{eq:spec_square} implies
\begin{align*}
(\W_\psi*\W_{\varphi_{e_j}})(z) &= (2\pi\eps)^{-d} \; | \langle T_z (\varphi_{e_j})_-,\psi\rangle|^2\\
& = (2\pi\eps)^{-d} \; | \langle T_z A_j^\dagger g_0,\psi\rangle|^2,
\end{align*}
since $(\varphi_{e_j})_-(x) = \varphi_{e_j}(-x) = -\varphi_{e_j}(x)$ for $x\in\R^d$. For the second representation of the Hermite spectrogram 
we compute 
\[
\eps\partial_j\circ T_z = \i p_j T_z  +T_z \circ\eps\partial_j
\] 
and deduce
\begin{align*}
T_z\circ A^\dagger_j &=
\tfrac{1}{\sqrt{2\eps}}\left( (x_j-q_j) T_z - T_z\circ\eps\partial_j\right)\\
&= \tfrac{1}{\sqrt{2\eps}}\left((x_j-q_j)T_z+\i p_j T_z-\eps\partial_j\circ T_z\right)\\
&= \left(A_j^\dagger -\tfrac{1}{\sqrt{2\eps}}(q_j-\i p_j) \right)\circ T_z
\end{align*}
so that
\[
\langle T_z A_j^\dagger g_0,\psi\rangle = \langle A_j^\dagger g_z,\psi\rangle - \tfrac{1}{\sqrt{2\eps}}(q_j+\i p_j)\langle g_z,\psi\rangle. 
\]
\end{proof}

Proposition~\ref{lem:FBI} will be used for explicit expressions of $\mu_\psi$  later on in \S\ref{sec:examples}, when $\psi$ 
is a superposition of Gaussian wave packets or a higher order Hermite function.

\section{Quantum dynamics}
\label{sec:quantum_dynamics}

As an application of the new density we consider the approximation of expectation values for the solution of the time-dependent semiclassical Schr\"odinger equation
\begin{equation*}
\i\eps\partial_t\psi(t) = H \psi(t),\qquad \psi(0) = \psi_0,
\end{equation*}
where the Schr\"odinger operator $H= \op(h)$ is the Weyl quantization of a smooth function $h:\R^{2d}\to\R$ of subquadratic growth, that is, all derivatives of the
function~$h$ of order two and higher are bounded. Then, $H$ is essentially self-adjoint (see~\cite[Exercise IV.12]{R87}) so that for all square integrable initial data $\psi_0\in L^2(\R^d)$  there is a unique global solution
$$
\psi(t) = \e^{-\i Ht/\eps}\psi_0,\qquad t\in\R.
$$ 
The classical counterpart to the Schr\"odinger equation is the Hamiltonian ordinary differential equation
\begin{equation*}
\dot z(t)= J\nabla h(z(t)) \quad\text{with}\quad  J =
  \begin{bmatrix}
    0 & \Id \\
    -\Id & 0
  \end{bmatrix}\in\R^{2d\times 2d}.
\end{equation*}
The associated Hamiltonian flow $\Phi_t:\R^{2d}\to\R^{2d}$ is globally defined and smooth for all times $t\in\R$, since $h$ is smooth and subquadratic.

In this setup we obtain the following quasiclassical approximation of time evolved quantum expectations using the new phase space density. 

\begin{corollary}\label{thm:prop_husimi}
Suppose $h:\R^{2d}\to\R$ is a smooth function of subquadratic growth and $H=\op(h)$. Let $\psi \in L^2(\Rd)$ with $\| \psi \|_{L^2} = 1$.
Then, for all Schwartz functions $a:\Rdd\to\R$, and $t\in\R$, there exists a constant $C=C(a,h,t)\ge0$ such that
$$
\left| \left\langle \e^{-\i Ht/\eps}\psi,\op(a) \e^{-\i Ht/\eps}\psi\right\rangle - \int_\Rdd (a\circ\Phi_t)(z) \mu_\psi(z)\d z \right| \le C\eps^2
$$
with the density $\mu_\psi$ from Definition~\ref{def:mu}, where $\Phi_t:\Rdd\to\Rdd$ is the Hamiltonian flow associated with $h$.
\end{corollary}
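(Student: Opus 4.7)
The plan is to chain two second-order approximations: first Egorov's theorem to transport the observable back to time zero, and then Theorem~\ref{thm:mu} applied to the transported symbol, finally combining both $O(\eps^2)$ errors via the triangle inequality.

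For the first step, I would invoke Egorov's theorem in its operator form: under the subquadratic growth assumption on $h$, there is a family of symbols $r^\eps_t:\Rdd\to\R$ with $\sup_{\eps>0}\norm{\op(r^\eps_t)}_{L^2\to L^2} < \infty$ such that
\begin{equation*}
\e^{\i Ht/\eps}\op(a)\e^{-\i Ht/\eps} = \op(a\circ\Phi_t) + \eps^2\op(r^\eps_t),
\end{equation*}
where the operator norm of the remainder is controlled by finitely many derivatives of $a\circ\Phi_t$ and $h$ of order at least three. Taking the inner product with $\psi$ (and using $\|\psi\|=1$ together with self-adjointness of $H$) gives
\begin{equation*}
\ip{\e^{-\i Ht/\eps}\psi}{\op(a)\e^{-\i Ht/\eps}\psi} = \ip{\psi}{\op(a\circ\Phi_t)\psi} + O(\eps^2).
\end{equation*}

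For the second step, I would apply Theorem~\ref{thm:mu} with the transported symbol $b_t \defeq a\circ\Phi_t$ to obtain
\begin{equation*}
\ip{\psi}{\op(b_t)\psi} = \int_\Rdd b_t(z)\mu_\psi(z)\d z + O(\eps^2),
\end{equation*}
where the hidden constant depends on derivatives of $b_t$ of order four and higher. Adding the two estimates and invoking the triangle inequality then yields the conclusion with a constant $C = C(a,h,t)$.

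The main technical obstacle is that $b_t = a\circ\Phi_t$ is in general \emph{not} a Schwartz function, even though $a$ is: the flow $\Phi_t$ need not decay at infinity, so spatial decay of $a$ is lost after composition. The key observation is that Egorov's theorem and Theorem~\ref{thm:mu} do not really require the Schwartz property but only uniform boundedness of sufficiently many derivatives of the symbol, together with the Calder\'on--Vaillancourt theorem to turn those bounds into $L^2$-operator norm bounds on the remainder. Since $h$ has bounded derivatives of order $\ge 2$, the variational equations associated with the Hamiltonian flow show by induction on the order that $\partial^\alpha\Phi_t$ is uniformly bounded on $\Rdd$ for every $|\alpha|\ge 1$ (with a $t$-dependent constant). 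Combining this with the Fa\`a di Bruno formula and the uniform boundedness of $\partial^\beta a$ for $|\beta|\ge 0$ yields uniform bounds on $\partial^\alpha b_t$ for every $|\alpha|\ge 1$. These bounds feed into both the Egorov remainder and the remainder produced by the proof of Theorem~\ref{thm:mu} (which itself reduces via Lemma~\ref{lem:star_to_weyl} to a Calder\'on--Vaillancourt estimate), so the time-dependent constant $C(a,h,t)$ comes out finite and the two $O(\eps^2)$ estimates legitimately combine.
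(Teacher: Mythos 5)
Your proof follows essentially the same route as the paper's: invoke Egorov's theorem (as in Bouzouina--Robert) to replace $\op(a)$ conjugated by the propagator with $\op(a\circ\Phi_t)$ up to an $O(\eps^2)$ remainder bounded in operator norm, then apply Theorem~\ref{thm:mu} to the transported symbol and combine the errors. Your additional discussion of why $a\circ\Phi_t$ fails to be Schwartz, and why uniform boundedness of its derivatives (via the subquadratic growth of $h$ and the variational equations) suffices for the Calder\'on--Vaillancourt-based remainder estimates, is correct and in fact addresses a point the paper's own proof passes over silently.
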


\begin{proof} 
The crucial element of our argument is Egorov's theorem~\cite[Theorem 1.2]{BR02}, which provides
\[
\e^{\i Ht/\eps}\op(a)\e^{-\i Ht/\eps} = \op(a\circ\Phi_t) + O(\eps^2),
\]
where the error depends on third and higher order derivatives of $a\circ\Phi_t$ and $h$, respectively. This means for the expectation value
\begin{align*}
\left\langle \e^{-\i Ht/\eps}\psi,\op(a) \e^{-\i Ht/\eps}\psi\right\rangle &=
\Big\langle\psi,\op(a\circ\Phi_t) \psi\Big\rangle + O(\eps^2).
\end{align*}
Now it remains to apply Theorem~\ref{thm:mu} to obtain
\begin{equation*}
\left\langle \e^{-\i Ht/\eps}\psi,\op(a) \e^{-\i Ht/\eps}\psi\right\rangle = \int_{\R^{2d}} (a\circ\Phi_t)(z) \mu_\psi(z) \d z + O(\eps^2). 
\end{equation*}
\end{proof}

%
\begin{remark}\label{rem:naive_husimi}
Replacing the new density by the Husimi function in Corollary~\ref{thm:prop_husimi} deteriorates the approximation in the sense that
\begin{equation*}
 \left\langle \e^{- \i Ht/\eps}\psi,\op(a)\e^{-\i Ht/\eps}\psi\right\rangle =  \int_\Rdd (a\circ\Phi_t)(z) \H_\psi(z) \d z+ O(\eps)
\end{equation*}
as $\eps\to0$. It requires an additional system of coupled ODEs involving higher order derivatives of the Hamilton function $h$ to retain second order accuracy with respect to $\eps$; see~\cite[Theorem 3]{KL13}. 
\end{remark}

\begin{remark}\label{rem:tot_energy_error}
Since the Hamiltonian $h$ is preserved by the classical flow $\Phi_t$,
the constant $C(h,h,t) = C(h,h)$ of Corollary~\ref{thm:prop_husimi} does not depend on time, so that the approximation 
error of the total energy expectation value is of size $O(\eps^2)$ but time-independent.
\end{remark}

\begin{remark}
In the special case of a harmonic oscillator $h(z)=z^TAz$, with $A\in \R^{2d \times 2d}$ positive definite, generating a 
flow $\Phi_t$ that is a linear orthogonal map on phase space, one can easily
see that
\[
\mu_{\psi(t)} = \mu_{\psi_0} \circ \Phi_{-t} .
\]
In other words, $\mu_{\psi(t)}$ satisfies the classical Liouville equation. In general, the time evolution of 
$\mu_{\psi(t)}$ is much more intricate, see the equations for the Husimi function 
derived in~\cite[Theorem 4.5 and \S 4.2]{AMP09}.
\end{remark}

\section{Examples of Phase Space Densities}
\label{sec:examples}
In this section we explicitly compute the new density $\mu_\psi$  from Definition \ref{def:mu} in three different cases, 
namely when $\psi$ is a Gaussian wave packet,
a Gaussian superposition, or a multivariate Hermite function.

\subsection{Gaussian wave packets}\label{sec:gaussian}
The Gaussian wave packet
\begin{equation*}
g_z(x) = (\pi\eps)^{-d/4} \exp\!\left(-\tfrac{1}{2\eps}|x-q|^2 + \tfrac{\i}{\eps}p\cdot (x-\tfrac12 q)\right),
\quad x\in\R^d.
\end{equation*}
centered at $z=(q,p)\in\Rdd$ has the Wigner function
\begin{equation}
  \label{eq:Wigner_for_Gaussian}
\W_{g_z}(w) = (\pi\eps)^{-d} \exp\!\left(-\tfrac1\eps|w-z|^2\right), \qquad w\in\Rdd.
\end{equation}
Its Husimi function is a Gaussian function, too, but broader, that is,  
$$
\H_{g_z}(w) = (2\pi\eps)^{-d}  \exp\!\left(-\tfrac{1}{2\eps}|w-z|^2\right),\qquad w\in\Rdd.
$$
The Hermite spectrograms of $g_z$ can be related to another Husimi function. Indeed,   
by the covariance property of the spectrograms, 
\begin{align*}
(\W_{g_z}* \W_{\varphi_{e_j}})(w) &=
(\W_{g_0}*\W_{\varphi_{e_j}})(w - z) = \H_{\varphi_{e_j}}(w-z)\\
&= (2\pi\eps)^{-d} \tfrac{1}{2\eps}|w_j-z_j|^2 \exp\!\left(-\tfrac{1}{2\eps}|w-z|^2\right)
\end{align*}
for all $w=(w_1,\ldots,w_d)\in\R^{2d}$ with $w_1,\ldots,w_d\in \R^{2}$ and all $j=1,\ldots,d$.
Summing all the Hermite spectrograms then yields
$$
\mu_{g_z}(w) =  (2\pi\eps)^{-d}\left(1+\tfrac{d}{2}-\tfrac{1}{4\eps}|w-z|^2\right) \exp\!\left(-\tfrac{1}{2\eps}|w-z|^2\right),\qquad w\in\Rdd.
$$
Figure~\ref{fig:gaussian} plots the new density $\mu_{g_z}(w)$ together with the corresponding Wigner and Husimi function in terms of the distance $|w-z|$. Its polynomial prefactor puts the density $\mu_{g_z}$ in between the Wigner and the Husimi Gaussian.

\begin{figure}[ht!]
\includegraphics[width=\textwidth]{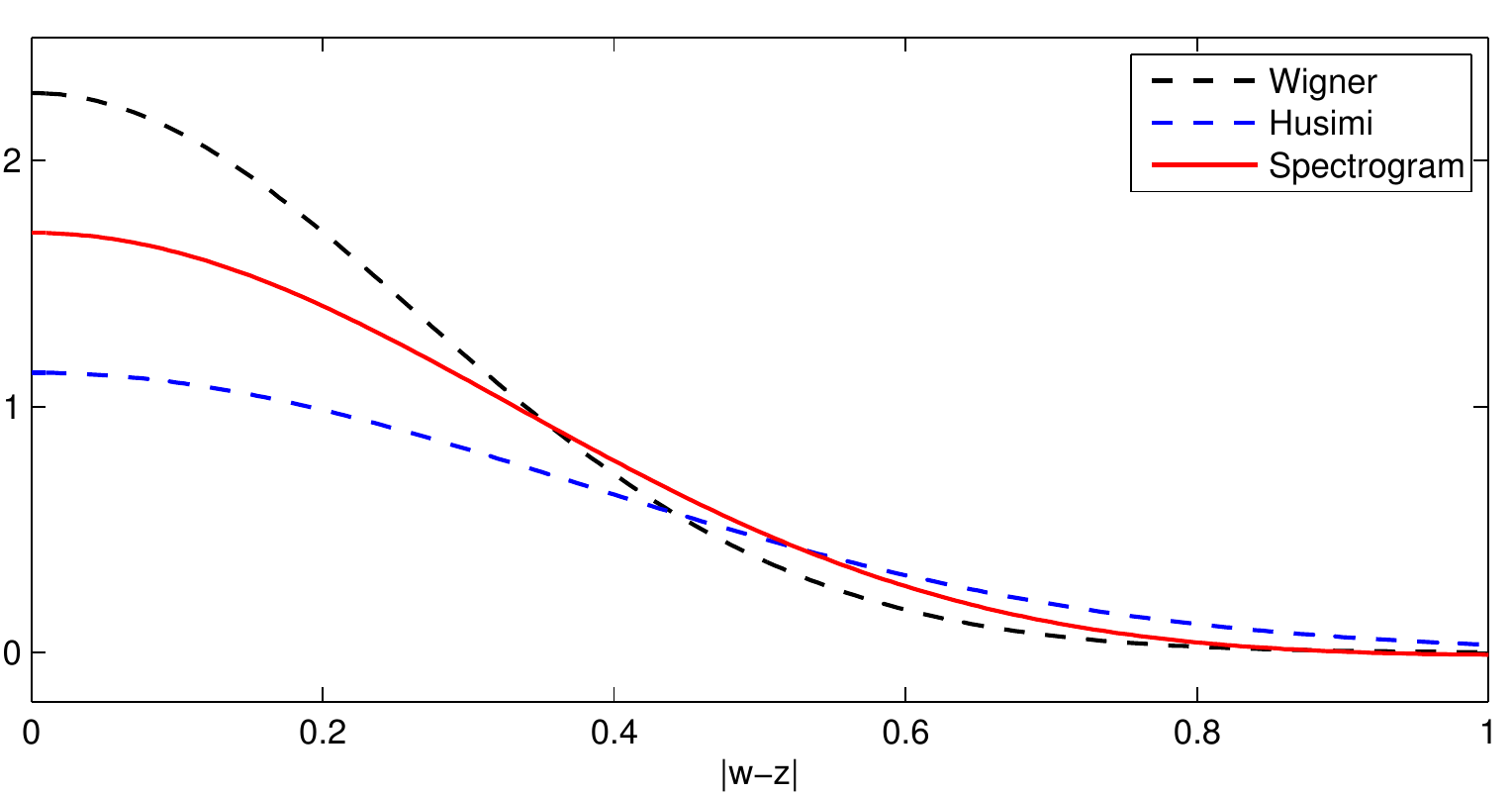}
\caption{\label{fig:gaussian}
Plots of the Wigner function (black dashed), the Husimi function (blue dashed), and the new density $\mu_{\psi}$ (red solid) for 
Gaussian wave packet $\psi=g_z$ in dimension $d=1$ with semiclassical parameter $\eps=0.14$. We plot the three functions in terms of the distance 
$|w-z|$.}
\end{figure}

\subsection{Gaussian superposition}\label{sec:superposition}
Next we compute the new density $\mu_\psi$ for a Gaussian superposition, that is, for  
\begin{equation*}
\psi = g_{z_1} + g_{z_2},\qquad z_1, z_2\in\Rdd.
\end{equation*}
Writing the value of the Husimi function at a point $z\in\Rdd$ as 
\begin{align*}
   \H_\psi(z)
  &= (2\pi\eps)^{-d} \left|\langle g_z, g_{z_{1}}\rangle + |\langle g_z, g_{z_{2}}\rangle\right|^{2} \\
  &= (2\pi\eps)^{-d} \;\left(
    |\langle g_z, g_{z_{1}}\rangle|^{2}
    + |\langle g_z, g_{z_{2}}\rangle|^{2}
    + 2\Re\parentheses{ \langle g_{z_{1}}, g_{z}\rangle \langle g_z, g_{z_{2}}\rangle }
  \right)
\end{align*}
motivates us to derive an explicit formula for the inner product of two Gaussian wave packets. 

\begin{lemma}\label{lem:ip}
  For any $z_{1}, z_{2} \in \R^{2d}$, we have
  \begin{equation*}
    \ip{g_{z_{1}}}{g_{z_{2}}} = \exp\parentheses{
        -\tfrac{|z_{1} - z_{2}|^{2}}{4\eps}
        + \tfrac{\i}{2\eps} \Omega(z_{1}, z_{2})
    },
  \end{equation*}
  where $\Omega$ is the standard symplectic form on $T^{*}\R^{d} \cong \R^{2d}$ as defined in \eqref{eq:sympl_form}.
\end{lemma}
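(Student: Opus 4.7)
The plan is to exploit the Heisenberg--Weyl representation of the Gaussian wave packets established in the paper, namely $g_{z_i} = T_{z_i} g_0$, so that the pairing reduces to a single matrix element of $T$ against the ground Gaussian. Concretely, I would first write
\begin{equation*}
\ip{g_{z_1}}{g_{z_2}} = \ip{T_{z_1} g_0}{T_{z_2} g_0} = \ip{g_0}{T_{z_1}^{\dagger} T_{z_2} g_0}
\end{equation*}
and then invoke the two properties of $T$ recalled just after the definition of the Heisenberg--Weyl operator: $T_{z_1}^{\dagger} = T_{-z_1}$, followed by the composition rule
\begin{equation*}
T_{-z_1} T_{z_2} = \exp\!\parentheses{-\tfrac{\i}{2\eps}\Omega(-z_1, z_2)}\, T_{z_2 - z_1} = \exp\!\parentheses{\tfrac{\i}{2\eps}\Omega(z_1, z_2)}\, T_{z_2 - z_1},
\end{equation*}
using antisymmetry of $\Omega$. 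This isolates the desired symplectic phase factor and leaves the remaining task of computing $\ip{g_0}{T_z g_0}$ for $z = z_2 - z_1$.

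The second step is then a direct Gaussian integral. With $z = (q,p)$, the exponent in
\begin{equation*}
\ip{g_0}{T_z g_0} = (\pi\eps)^{-d/2} \int_{\Rd} \exp\!\parentheses{-\tfrac{1}{2\eps}|x|^2 - \tfrac{1}{2\eps}|x-q|^2 + \tfrac{\i}{\eps} p \cdot (x - \tfrac{q}{2})} \d x
\end{equation*}
is a quadratic polynomial in $x$, which I would complete the square on by treating $\alpha \defeq q + \i p \in \C^d$ as a complex shift. A contour deformation (or equivalently the standard analytic continuation of the Gaussian moment generating function) evaluates the integral to $\exp(-|z|^2/(4\eps))$ after the real and imaginary parts of the constant term combine and the cross term $\tfrac{\i}{2\eps} p \cdot q$ cancels against the prefactor phase. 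Substituting back gives the claimed formula $\exp(-\tfrac{|z_1-z_2|^2}{4\eps} + \tfrac{\i}{2\eps}\Omega(z_1,z_2))$.

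The only real obstacle is bookkeeping in the complex completion of the square: one must carefully track that $\sum_j \alpha_j^2 = |q|^2 - |p|^2 + 2\i q \cdot p$ (a sum of complex squares, not of moduli) and combine this with the $-|q|^2/(2\eps)$ and $-\i p \cdot q/(2\eps)$ contributions originating from $|x-q|^2$ and the modulation phase of $T_z$. Once these cancellations are performed correctly, no further analysis is needed beyond the fact that the shifted Gaussian integral along $\R^d$ equals the unshifted one, which is justified by Cauchy's theorem and the rapid decay of $e^{-|x|^2/\eps}$.
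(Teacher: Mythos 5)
Your proposal is correct and follows essentially the same route as the paper: both reduce the general pairing to $\ip{g_0}{T_{z_2-z_1}g_0}$ via $T_{z_1}^\dagger = T_{-z_1}$ and the composition rule for Heisenberg--Weyl operators, and then evaluate that matrix element by a Gaussian computation (which the paper dismisses as ``a direct calculation'' and you carry out explicitly by complex completion of the square). Your bookkeeping of the complex square $\sum_j\alpha_j^2$ and the cancellation of the $\tfrac{\i}{2\eps}p\cdot q$ cross term is accurate.
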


\begin{proof}
By a direct calculation one can show 
 \begin{equation*}
    \ip{ g_{0} }{ g_{z} } = \ip{ g_{0} }{ T_{z} g_{0} }
    = \exp\parentheses{ -\tfrac{|z|^{2}}{4\eps} },\qquad z\in\Rdd.
  \end{equation*}
For the general case we then have
  \begin{align*}
    \ip{g_{z_{1}}}{g_{z_{2}}}
    &= \ip{ T_{z_{1}} g_{0} }{ T_{z_{2}} g_{0} } = \ip{ g_{0} }{ T_{-z_{1}} T_{z_{2}} g_{0} } \\
    &= \exp\parentheses{
      \tfrac{\i}{2\eps} \Omega(z_{1}, z_{2})
    } \ip{ g_{0} }{ T_{z_{2}-z_{1}} g_{0} } = \exp\parentheses{
      \tfrac{\i}{2\eps} \Omega(z_{1}, z_{2})
      -\tfrac{|z_{2} - z_{1}|^{2}}{4\eps}
      }. 
  \end{align*}
\end{proof}

By Lemma~\ref{lem:ip} we have
\begin{equation*}
  \ip{g_{z_{1}}}{g_{z}} \ip{g_{z}}{g_{z_{2}}} =
  \exp\parentheses{ -\tfrac{|z - z_{1}|^{2} + |z - z_{2}|^{2}}{4\eps} }
  \exp\parentheses{ \tfrac{\i}{2\eps} \Omega(z_{1} - z_{2}, z) }
\end{equation*}
and obtain for the Husimi function 
\begin{align*}
   \H_\psi(z)
  &= (2\pi\eps)^{-d} \left( 
    \exp\parentheses{
      -\tfrac{|z - z_{1}|^{2}}{2\eps}
    }
    + \exp\parentheses{
      -\tfrac{|z - z_{2}|^{2}}{2\eps}
    } \right. \\ \nonumber
    &  \qquad \left. +2 \exp\parentheses{
      -\tfrac{|z - z_{1}|^{2} + |z - z_{2}|^{2}}{4\eps}
    } \cos\parentheses{
    \tfrac{1}{2\eps} 
    \Omega(z_{1} - z_{2}, z)
  }\right).
\end{align*}
For the Hermite spectrograms we first use Lemma~\ref{lem:FBI} to obtain
\begin{align*}
  (\W_{\psi}*\W_{\varphi_{e_j}})(z)
  &= (2\pi\eps)^{-d} \; \abs{
    \sum_{k=1}^{2} \ip{g_{z}}{ \parentheses{A_{j} T_{z_{k}} - \tfrac{z^{\C}_{j}}{\sqrt{2\eps}} T_{z_{k}}}  g_{0} }
    }^{2}.
\end{align*}
We notice that, for $j = 1, \dots, d$ and $m = 1, 2$,
\begin{equation*}
  A_{j} T_{z_{m}} = \tfrac{z^{\C}_{m,j}}{\sqrt{2\eps}} T_{z_{m}} + T_{z_{m}} A_{j},
\end{equation*}
where $z^{\C}_{m,j} \defeq q_{m,j}+\i p_{m,j} \in \C$. This implies that
\begin{align*}
  \ip{g_{z}}{ \parentheses{A_{j} T_{z_{m}} - \tfrac{z^{\C}_{j}}{\sqrt{2\eps}} T_{z_{m}}} g_{0} } 
  &= \ip{g_{z}}{ \parentheses{ \tfrac{z^{\C}_{m,j} - z^{\C}_{j}}{\sqrt{2\eps}} T_{z_{m}} + T_{z_{m}} A_{j} } g_{0} } \\
  &= \tfrac{z^{\C}_{m,j} - z^{\C}_{j}}{\sqrt{2\eps}} \ip{g_{z}}{g_{z_{m}}}.
\end{align*}
Consequently, since $|z^{\C}_{m,j} - z^{\C}_{j}| = |z_{m,j} - z_{j}|$ for each $m$ and $j$, we have
\begin{align*}
(2\pi\eps)^{d}\,   (\W_{\psi}*\W_{\varphi_{e_j}})(z)
  &= \abs{
    \tfrac{z^{\C}_{1,j} - z^{\C}_{j}}{\sqrt{2\eps}} \ip{g_{z}}{g_{z_{1}}}
    + \tfrac{z^{\C}_{2,j} - z^{\C}_{j}}{\sqrt{2\eps}} \ip{g_{z}}{g_{z_{2}}}
    }^{2} \\
  &= \tfrac{1}{2\eps} \parentheses{
     |z_{1,j} - z_{j}|^{2} \abs{ \ip{g_{z}}{g_{z_{1}}} }^{2}
    + |z_{2,j} - z_{j}|^{2} \abs{ \ip{g_{z}}{g_{z_{2}}} }^{2}
    }\\
  &\quad+ \tfrac{1}{\eps}\Re\brackets{
     \overline{(z^{\C}_{1,j} - z^{\C}_{j})} (z^{\C}_{2,j} - z^{\C}_{j}) \ip{g_{z_{1}}}{g_{z}} \ip{g_{z}}{g_{z_{2}}}
  }.
\end{align*}
Now observe that
\begin{align*}
  \sum_{j=1}^{d} \overline{(z^{\C}_{1,j} - z^{\C}_{j})} (z^{\C}_{2,j} - z^{\C}_{j})
  &= \sum_{j=1}^{d} [ (q_{1,j} - q_{j}) - \i (p_{1,j} - p_{j}) ] [ (q_{2,j} - q_{j}) + \i (p_{2,j} - p_{j}) ] \\
  &= (z - z_{1}) \cdot (z - z_{2}) + \i\,\Omega(z - z_{1}, z - z_{2}).
\end{align*}
Therefore,
\begin{align*}
  &\sum_{j=1}^{d}(\W_{\psi}*\W_{\varphi_{e_j}})(z) \\ 
  &\quad= \tfrac{1}{(2\pi\eps)^{d}} \biggl\{
    \tfrac{|z - z_{1}|^{2}}{2\eps} \exp\parentheses{ -\tfrac{|z - z_{1}|^{2}}{2\eps} }
    + \tfrac{|z - z_{2}|^{2}}{2\eps} \exp\parentheses{ -\tfrac{|z - z_{2}|^{2}}{2\eps} } \\ 
  &\hspace{0.8in} + \tfrac{1}{\eps}
    \exp\parentheses{
    -\tfrac{|z - z_{1}|^{2} + |z - z_{2}|^{2}}{4\eps}
    }
    \Bigl[
    (z - z_{1}) \cdot (z - z_{2}) \cos\parentheses{ \tfrac{1}{2\eps} \Omega(z_{1} - z_{2}, z) } \\ 
  &\hspace{2.4in}
    - \Omega(z - z_{1}, z - z_{2}) \sin\parentheses{ \tfrac{1}{2\eps} \Omega(z_{1} - z_{2}, z) }
    \Bigr]
    \biggr\}.
\end{align*}
Combining the Husimi function and the Hermite spectrograms gives the density $\mu_\psi$ as plotted before in Figure~\ref{fig:superposition} in the Introduction.

\subsection{Hermite functions}
\label{ssec:Hermite_functions}
Let us consider the higher order Hermite functions
\[
\varphi_k \defeq \frac{1}{\sqrt{k!}} (A^\dagger)^k g_0,\qquad k=(k_1,\ldots,k_d)\in\N^d,
\]
which result from the $k$-fold application of the raising operator $A^\dagger$ to the Gaussian wave packet $g_0$ centered at the origin.
The FBI transform of a Hermite function is known as
\begin{equation}\label{eq:FBI}
(2\pi\eps)^{-d/2} \langle g_z,\varphi_k\rangle = 
\frac{1}{\sqrt{(2\pi\eps)^{d}k!}} \left(\tfrac{1}{\sqrt{2\eps}}(q-\i p)\right)^k \exp\!\left(-\tfrac{1}{4\eps}|z|^2\right)
\end{equation}
for $z\in\Rdd$; see~\cite[\S2]{F13} or \cite[Proposition 5]{LT14}. Hence, the Husimi function satisfies
\begin{equation}\label{eq:hu_he}
  \mathcal{H}_{\varphi_{k}}(z) = \frac{1}{(2\pi\eps)^{d} k!}  \left|\tfrac{1}{\sqrt{2\eps}}z\right|^{2k}    \,\exp\!\left(-\tfrac{1}{2\eps}|z|^2\right) 
  = \prod_{j=1}^d h_{k_j}(z_j),
\end{equation}
where 
\[
h_n(w)\defeq \H_{\varphi_n}(w),\qquad w\in\R^2,
\]
denotes the Husimi function of the univariate $n$-th Hermite function $\varphi_n$, $n\in\N$.
For the multivariate Hermite spectrograms we obtain the following: 

\begin{lemma}\label{lem:hermite} For all $k\in\N^d$ and $j=1,\ldots,d$ and $z\in\Rdd$, we have
\begin{align*}
& (\W_{\varphi_k}*\W_{\varphi_{e_j}})(z)\\
& \quad = \left(k_j h_{k_j-1}(z_j)-2k_jh_{k_j}(z_j)+(k_j+1)h_{k_j+1}(z_j)\right)\cdot \prod_{n\neq j} h_{k_n}(z_n).
\end{align*}
\end{lemma}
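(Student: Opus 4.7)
The plan is to apply Proposition~\ref{lem:FBI} with $\psi=\varphi_{k}$, invoke the action $A_{j}\varphi_{k}=\sqrt{k_{j}}\,\varphi_{k-e_{j}}$ of the lowering operator (with the convention $A_{j}\varphi_{k}=0$ if $k_{j}=0$), and then expand the resulting modulus squared. Setting $a\defeq \sqrt{k_{j}}\,\langle g_{z},\varphi_{k-e_{j}}\rangle$ and $b\defeq \tfrac{z_{j}^{\C}}{\sqrt{2\eps}}\langle g_{z},\varphi_{k}\rangle$, one gets
\[
  (\W_{\varphi_k}*\W_{\varphi_{e_j}})(z) = (2\pi\eps)^{-d}\abs{a-b}^{2} = (2\pi\eps)^{-d}\parentheses{\abs{a}^{2}+\abs{b}^{2}-2\Re(a\bar{b})},
\]
and each of the three contributions will match one of the three terms on the right-hand side of the lemma.

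The two diagonal terms are direct consequences of the product structure~\eqref{eq:hu_he} of the multivariate Husimi function. One immediately reads off
\[
  (2\pi\eps)^{-d}\abs{a}^{2} = k_{j}\,\H_{\varphi_{k-e_{j}}}(z) = k_{j}\,h_{k_{j}-1}(z_{j})\prod_{n\neq j}h_{k_{n}}(z_{n}).
\]
For $\abs{b}^{2}$, the one-line identity $\tfrac{\abs{w}^{2}}{2\eps}\,h_{n}(w)=(n+1)\,h_{n+1}(w)$ (immediate from the closed formula for $h_{n}$) yields
\[
  (2\pi\eps)^{-d}\abs{b}^{2} = \tfrac{\abs{z_{j}}^{2}}{2\eps}\,\H_{\varphi_{k}}(z) = (k_{j}+1)\,h_{k_{j}+1}(z_{j})\prod_{n\neq j}h_{k_{n}}(z_{n}).
\]

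The cross term is the only delicate point. Introducing $\zeta_{n}\defeq (q_{n}-\i p_{n})/\sqrt{2\eps}$, so that $\overline{z_{j}^{\C}}/\sqrt{2\eps}=\zeta_{j}$, the FBI formula~\eqref{eq:FBI} reads $\langle g_{z},\varphi_{m}\rangle=\tfrac{\zeta^{m}}{\sqrt{m!}}\exp(-\abs{z}^{2}/(4\eps))$, and hence
\[
  a\bar{b} = \tfrac{\sqrt{k_{j}}}{\sqrt{(k-e_{j})!\,k!}}\,\zeta_{j}\,\zeta^{k-e_{j}}\,\overline{\zeta}^{k}\,\exp\!\bigl(-\abs{z}^{2}/(2\eps)\bigr).
\]
The crucial cancellation is $\zeta_{j}\zeta^{k-e_{j}}=\zeta^{k}$, which turns the product $\zeta^{k}\overline{\zeta}^{k}=\prod_{n}\abs{\zeta_{n}}^{2k_{n}}=\prod_{n}(\abs{z_{n}}^{2}/(2\eps))^{k_{n}}$ into a real quantity. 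Using $(k-e_{j})!=k!/k_{j}$, so that $\sqrt{k_{j}}/\sqrt{(k-e_{j})!\,k!}=k_{j}/k!$, I arrive at $(2\pi\eps)^{-d}\cdot 2\Re(a\bar{b})=2k_{j}\,\H_{\varphi_{k}}(z)=2k_{j}\,h_{k_{j}}(z_{j})\prod_{n\neq j}h_{k_{n}}(z_{n})$, and summing the three pieces proves the lemma. The main obstacle is precisely this reality of $a\bar{b}$: without the conjugate $\overline{z_{j}^{\C}}/\sqrt{2\eps}$ in $\bar{b}$ cancelling the factor $\zeta_{j}$ missing from $\zeta^{k-e_{j}}$, an oscillatory phase would survive and the right-hand side would not collapse to the stated three-term form.
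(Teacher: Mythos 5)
Your proof is correct and follows essentially the same route as the paper: both start from Proposition~\ref{lem:FBI} with $A_j\varphi_k=\sqrt{k_j}\,\varphi_{k-e_j}$ and the FBI formula~\eqref{eq:FBI}. The only cosmetic difference is that the paper factors the common monomial $\bigl(\tfrac{1}{\sqrt{2\eps}}(q-\i p)\bigr)^{k-e_j}$ out of the bracket to obtain the real square $\bigl(k_j-\tfrac{1}{2\eps}|z_j|^2\bigr)^2$ before expanding, whereas you expand $|a-b|^2$ first and verify the reality of the cross term afterwards — the same cancellation $\zeta_j\zeta^{k-e_j}=\zeta^k$ in a different order.
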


\begin{proof}
Since $A_j\varphi_k = \sqrt{k_j} \varphi_{k-e_j}$,  Proposition~\ref{lem:FBI} implies  
\begin{align*}
(\W_{\varphi_k}*\W_{\varphi_{e_j}})(z) &= 
(2\pi\eps)^{-d} \left| \langle g_z,A_j\varphi_k\rangle -\tfrac{1}{\sqrt{2\eps}}z_j^\C\langle g_z,\varphi_k\rangle\right|^2\\
&=
(2\pi\eps)^{-d} \left| \sqrt{k_j}\langle g_z,\varphi_{k-e_j}\rangle -\tfrac{1}{\sqrt{2\eps}}z_j^\C\langle g_z,\varphi_k\rangle\right|^2.
\end{align*}
By the formula \eqref{eq:FBI} for the FBI transform, we obtain
\begin{align*}
(\W_{\varphi_k}*\W_{\varphi_{e_j}})(z) &=
\frac{\exp\!\left(-\tfrac{1}{2\eps}|z|^2\right)}{(2\pi\eps)^{d}k!}  \left| k_j\left(\tfrac{1}{\sqrt{2\eps}}(q-\i p)\right)^{k-e_j} 
- \tfrac{1}{\sqrt{2\eps}}z_j^\C \left(\tfrac{1}{\sqrt{2\eps}}(q-\i p)\right)^{k}\right|^2\\
&=
\frac{\exp\!\left(-\tfrac{1}{2\eps}|z|^2\right)}{(2\pi\eps)^{d}k!} \left(k_j- \tfrac{1}{2\eps}|z_j|^2\right)^2
\left| \left(\tfrac{1}{\sqrt{2\eps}}z\right)^{k-e_j} \right|^2.
\end{align*}
Then it remains to observe that
\begin{align*}
&\frac{\exp\!\left(-\tfrac{1}{2\eps}z_{j}^2\right)}{(2\pi\eps)k_{j}!} \left(k_j- \tfrac{1}{2\eps}|z_j|^2\right)^2 \left|\tfrac{1}{\sqrt{2\eps}}z_j\right|^{2k_j-2}\\
 &\qquad = k_j h_{k_j-1}(z_j) - 2k_j h_{k_j}(z_j) + (k_j+1) h_{k_j+1}(z_j). 
\end{align*}
\end{proof}

The combination of the Husimi function \eqref{eq:hu_he} and the Hermite spectrograms of Lemma~\ref{lem:hermite} gives an explicit formula for the density $\mu_\psi$ when $\psi=\varphi_k$ for  $k\in\N^d$. 



\section{Numerical Experiments}\label{sec:numerics}

We present numerical experiments\footnote{All experiments have been performed with \textsc{Matlab} 8.3 on a $3.33$
  GHz Intel Xeon X5680 processor.} for computing expectation values for the solution of the time-dependent Schr\"odinger equation 
\[
\i\eps\partial_t \psi(t) = (-\tfrac{\eps^2}{2}\Delta + V)\psi(t),\qquad \psi(0)=\psi_0,
\]
with three different potentials $V:\R^d\to\R$. The various setups shall illustrate important aspects of our new algorithm, such as the second order
accuracy with respect to $\eps$, the good applicability in higher dimensions, and the
capability of describing fundamental quantum effects.

\subsection{Discretization}
For the algorithmic discretization of Corollary~\ref{thm:prop_husimi}
we proceed similarly  as in~\cite{LR10,GL14,KL13}. 
We consider various smooth functions $a:\Rdd\to\R$ and evaluate  the phase space integral on the right hand side of the semiclassical approximation
\begin{align*}
\left\langle \psi(t),\op(a) \psi(t)\right\rangle& = \int_\Rdd (a\circ\Phi_t)(z) \mu_{\psi_0}(z)\d z + O(\eps^2)
\end{align*}
for normalized initial data $\psi_0\in L^2(\R^d)$, $\|\psi_0\| = 1$, via the quadrature formula
\begin{align*}
\int_\Rdd (a\circ\Phi_t)(z) \mu_{\psi_0}(z)\d z & = (1+\tfrac{d}2)\int_\Rdd (a\circ\Phi_t)(z) \H_{\psi_0}(z)\d z \\
  &\quad - \tfrac{1}2  \sum_{j=1}^d \int_\Rdd (a\circ\Phi_t)(z)(\W_{\psi_0}*\W_{\varphi_{e_j}})(z)\d z \\
& \approx \frac{1+\tfrac{d}2}{N} \sum_{k=1}^N  (a\circ\Phi_t)(z_k) -\frac{d}{2N} \sum_{k=1}^N  (a\circ\Phi_t)(w_k),
\end{align*}
where one samples the quadrature points according to the probability measures
\[
z_1,\hdots,z_N \sim \H_{\psi_0}\,,
\qquad
w_1,\hdots ,w_N   \sim  \tfrac1d\sum_{j=1}^d \W_{\psi_0} * \W_{\varphi_{e_j}};
\]
see also~\S\ref{sec:gamma_sampling} for the sampling strategies used for the Hermite spectrograms. The rate of convergence for the above quadrature rule is proportional to $N^{-1/2}$ for Monte Carlo samplings. For low discrepancy (Quasi-Monte Carlo) sampling the convergence is faster, that is, of the order $\log(N)^{2d}/N$. However, the literature on non-uniform Quasi-Monte Carlo sampling is scarce, and it seems to be an open question whether the transformed Halton sequences employed in our numerical experiments form indeed a low discrepancy set or just come very close to being so in practice, see also \cite{AD14}.

For the discretization of the Hamiltonian flow $\Phi_t$, we apply the eighth-order symplectic splitting method 
from~\cite[Table 2.D]{Y90}, which is a suitable composition of the linear flows of
\[
\begin{cases} \dot q = p\\ \dot p =0\end{cases}\qquad \text{ and } \qquad  \begin{cases} \dot q = 0\\ \dot p =-\nabla V(q)\end{cases}.
\]
Since our algorithm evolves an ensemble of classical trajectories, the use of symplectic time integrators is crucial; see also \cite[Fig.~4.2]{LR10}.

\begin{figure}[ht!]
\includegraphics[width=\textwidth]{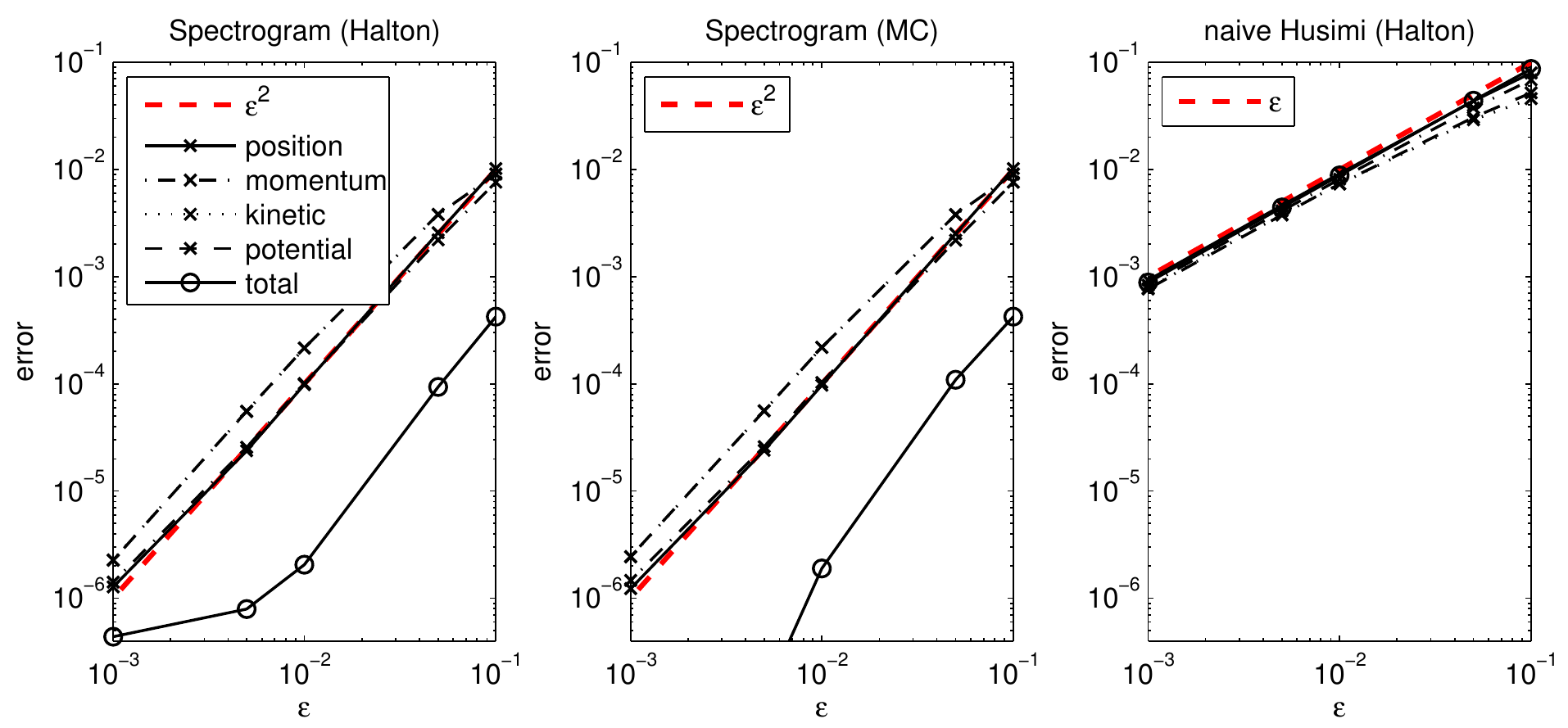}
\caption{\label{fig:gauss_error}
Average errors~\eqref{eq:aver_error} of the expectation values of various observables on the time interval $[0,20]$ for the new spectrogram method
 with initial Halton (left) and Monte Carlo (middle) sampling and results for the naive Husimi method with Halton sampling (right)
for the torsional potential and Gaussian initial data centered at $z = (1,0,0,0)$.}
\end{figure}

\subsection{Two-dimensional torsional potential}
Our first numerical experiments are conducted for the two-dimensional torsional potential
\[
V(q_1,q_2) = 2 - \cos(q_1) - \cos(q_2),\qquad q\in\R^2,
\]
and different values of the semiclassical parameter $\eps$. As the initial state we consider the Gaussian wave packet $\psi_0=g_z$ with phase space center $z = (1,0,0,0)$. 
This setup has already been considered in \cite{FGL09,LR10,KL13,GL14}. We investigate the dynamics of the following symbols $a:\R^4\to\R$,
\begin{enumerate}
\item Position: $a(q,p) = q_1$ and $a(q,p)=q_2$,
\item Momentum: $a(q,p)=p_1$ and $a(q,p)=p_2$,
\item Kinetic and potential energy: $a(q,p) = \tfrac12|p|^2$ and $a(q,p)=V(q)$,
\item Total energy: $a(q,p) = \tfrac12|p|^2 + V(q)$,
\end{enumerate}
and compare the outcome of the new algorithm with the naive, first-order Husimi approximation
\begin{equation}\label{eq:naive_Husimi}
\langle\psi(t),\op(a)\psi(t)\rangle = \int_{\Rdd} (a\circ\Phi_t)(z) \H_{\psi_0}(z) \d z + O(\eps).
\end{equation}

The left and middle panel of Figure~\ref{fig:gauss_error} confirm the second order accuracy of the new method for both 
Monte Carlo and Halton type samplings of the initial density~$\mu_{\psi_0}$.  The right panel illustrates that the naive Husimi method is indeed only of first order in $\eps$.
The time-averaged errors
\begin{equation}\label{eq:aver_error}
\frac{1}{20} \int_{0}^{20} \Big|\langle\psi_{\rm ref}(t),\op(a)\psi_{\rm ref}(t)\rangle -   \frac{1+\tfrac{d}2}{N} \sum_{k=1}^N  (a\circ\Phi_t)(z_k) + \frac{d}{2N} \sum_{k=1}^N  (a\circ\Phi_t)(w_k) \Big| \d t
\end{equation}
are taken with respect to highly accurate grid based reference solutions $ \psi_{\rm ref}(t)\approx \psi(t)$
of the Schr\"odinger equation; for details see Appendix~\ref{app:num_torsional}.

The total energy error in Figure~\ref{fig:gauss_error} is smaller 
than the errors for other expectation values. Firstly, this can be explained by
the fact that the total energy error is 
time independent, as explained in Remark~\ref{rem:tot_energy_error}, and
symplectic integrators are practically energy preserving
on the time scales considered here.
Secondly, the leading
$O(\eps^2)$ term in the asymptotic expansion of the error 
\begin{align*}
\langle &g_z,\op(h)g_z\rangle - \int_\Rdd h(w)\mu_{g_z}(w)dw = 
\int_\Rdd h(w)(\W_{g_z} - (1-  \tfrac{\eps}4 \Delta)\H_{g_z}) (w) dw \\
&=  \int_\Rdd h(w)\left(\W_{g_z} - (1-  \tfrac{\eps}4 \Delta)(1+ \tfrac{\eps}4 \Delta+ \tfrac{\eps^2}{32} \Delta^2)
\W_{g_z}\right) (w) dw  + O(\eps^3)\\
&=  \tfrac{\eps^2}{32} \int_\Rdd h(w)\Delta^2\W_{g_z}(w) dw + O(\eps^3)
\end{align*}
can be bounded by the small constant
\begin{align*}
\big|\tfrac{1}{32} \int_\Rdd \Delta^2h(w)\W_{g_z}(w) dw \big| \leq \tfrac{1}{16},
\end{align*} 
which follows from the special form of the torsional potential and the initial state.
The total energy errors for small values of $\eps$ are larger in the left panel of Figure~\ref{fig:gauss_error}
than in the middle panel, since the relatively small number of
Halton points gives rise to perceptible quadrature errors.

\begin{figure}[ht!]
\includegraphics[width=\textwidth]{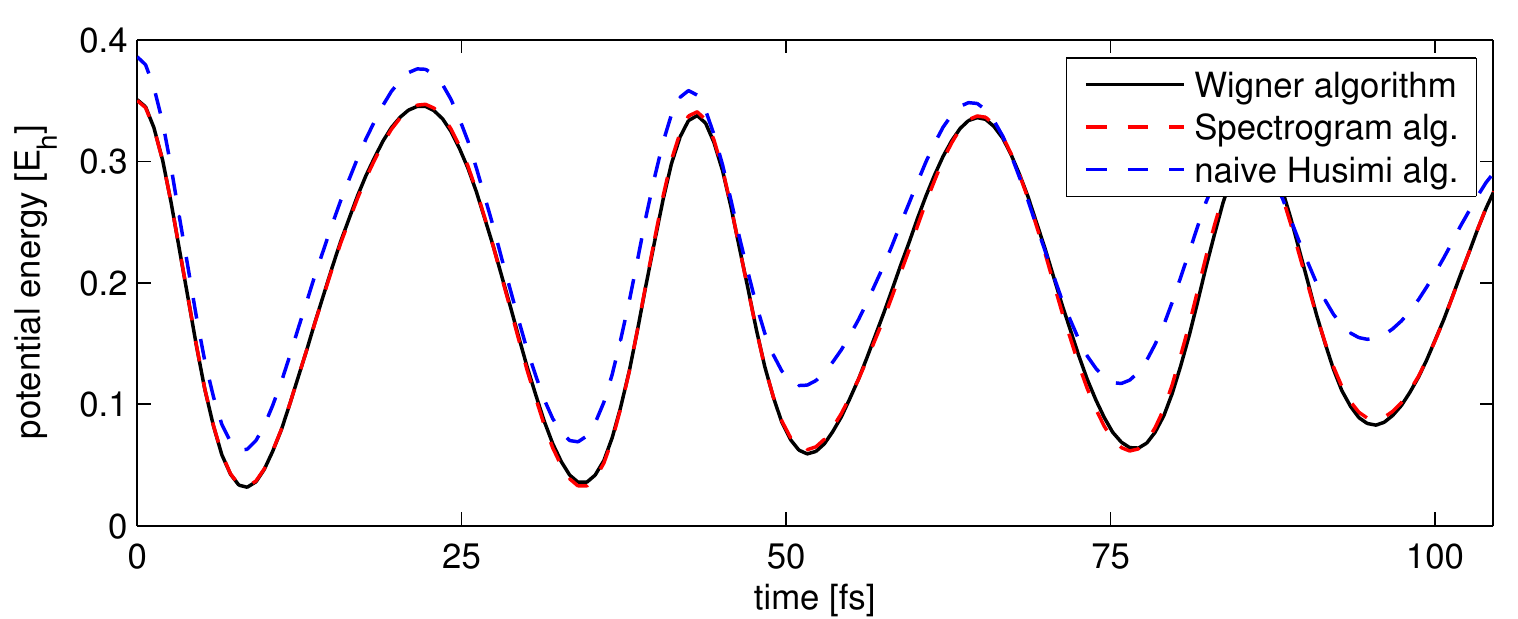}
\caption{\label{fig:henon_pot}
The approximate dynamics of the potential energies for the $32$-dimensional Henon--Heiles system
obtained by
the Wigner,  Spectrogram, and naive Husimi algorithms.}
\end{figure}

\subsection{Henon--Heiles dynamics for {\boldmath $d=32$}}
Henon--Heiles type systems have been used for benchmark simulations with the multiconfiguration time-dependent Hartree method (MCTDH); see \cite{NM02}. We follow the presentation in \cite[\S5.B]{KL14} by using
the potential 
\[
V_{32}(q) = \tfrac12 |q|^2 + 1.8436 \sum_{j=1}^{31} (q_j^2q_{j+1} - \tfrac13 q_{j+1}^3) + 0.4 \sum_{j=1}^{31} (q_j^2 + q_{j+1}^2)^2
\]
and the semiclassical parameter $\eps = 0.0029$, which is a model for the dynamics of a hydrogen atom
on a high-dimensional
potential energy surface that exhibits regions of chaotic motion. The quartic confinement 
guarantees that none of the classical trajectories
escapes to infinity. Moreover, as in~\cite{NM02,KL14}, the initial state is 
a Gaussian wave packet $\psi_0=g_z$
centered at $z=(q,p)$ with $p=0$ and $q_j = 0.1215$ for all $j=1,\ldots,32$. 

Since grid-based reference solutions are not available for this high-dimensional setting, we compare our
method with the results obtained by the second order approximation
\[
\langle \psi(t),\op(a)\psi(t)\rangle = \int_{\Rdd} (a\circ\Phi_{t})(z) \W_{\psi_0}(z) \d z + O(\eps^2).
\]
Note that, in this particular case, the initial Wigner function $\W_{\psi_0}(z)$ is the Gaussian~\eqref{eq:Wigner_for_Gaussian}, and hence the Wigner algorithm does not pose a difficulty in the initial sampling.
Figure~\ref{fig:henon_pot} shows the good agreement of the results from 
the Wigner and the spectrogram methods by means of the potential energy, and a considerable discrepancy with
respect to the naive Husimi approximation~\eqref{eq:naive_Husimi}.


\subsection{Escape from a cubic potential well}
Finally, we  explore whether the new method is capable of
describing the evolution of a quantum system that moves out of a potential well. For this purpose we consider the semiclassical Schr\"odinger operator $H=-\tfrac{\eps^2}{2}\Delta + V$
with the one-dimensional barrier potential
\[
V(q) = 2.328\cdot  q^2 +  q^3 + 0.025 q^4,\qquad q\in\R,
\]
and $\eps=0.4642$; see also Figure~\ref{fig:barrier}. This Hamiltonian can be derived from
the Schr\"odinger operator 
\[
 - \tfrac12\hbar^2 \Delta + \tfrac12 x^2 + 0.1x^3
\]
 with $\hbar=1$ from \cite{OL13,PP00} by applying the space rescaling $x\mapsto \sqrt[3]{0.1}x$
and adding the confinement term $0.025\cdot q^4$. The confinement prevents phase space
trajectories from finite time blow up and guarantees that $H$ is
essentially self-adjoint.
The global potential energy minimum $V(x_{\rm glob}) \approx -4765$ is attained at $x_{\rm glob}\approx -28.4$, and the confinement
 is very small in the region of interest close to the origin.

\begin{figure}[ht!]
\includegraphics[width=\textwidth]{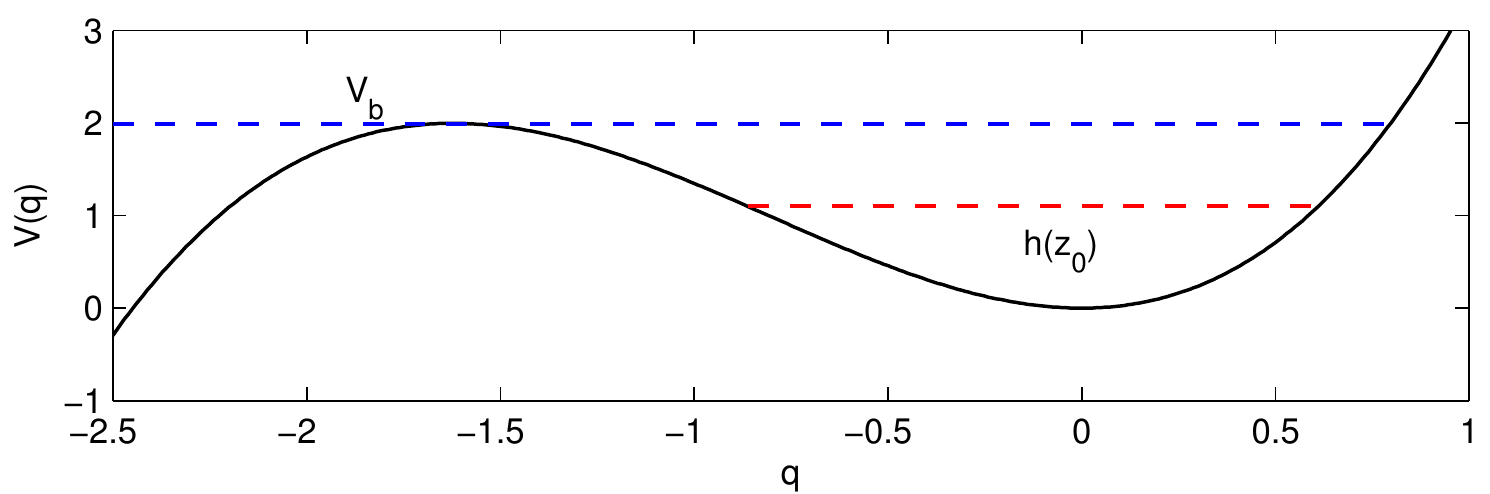}
\caption{\label{fig:barrier}
The cubic barrier potential $V$ with the barrier energy~$V_b$ and the energy $h(z_0)$ of the trapped classical particle.}
\end{figure}

As initial states we consider translated Hermite functions
\[
\psi_0 = T_{z_0}\varphi_k,\qquad k\in\{0,1,3,6\}
\] 
localized around
$z_0 = (0.4642,-1)$, which corresponds to the initial phase space center used in \cite{OL13}.
Since the associated classical energy $h(z_0)$ lies below the barrier energy $V_b\approx 2.03$,
 the classical particle is trapped in the well for all times. In contrast, the quantum energy 
of the initial state $g_{z_0}$ is approximately $2.09$, and the energies of the excited states
are even higher. Consequently, the expected phase space center of the quantum particle will escape
from the classical trapping region after short time. 

Figure~\ref{fig:tunneling} displays the trajectories of the expected phase space centers obtained by the purely classical, the full quantum, and the semiclassical spectrogram dynamics for the four different initial states. 
The results from the spectrogram algorithm show decent qualitative agreement  with the 
behavior of the quantum solution even though the semiclassical parameter $\eps=0.4642$ is rather large. We also note that the results become more accurate for initial states of higher energy.

\begin{figure}[ht!]
\includegraphics[width=\textwidth]{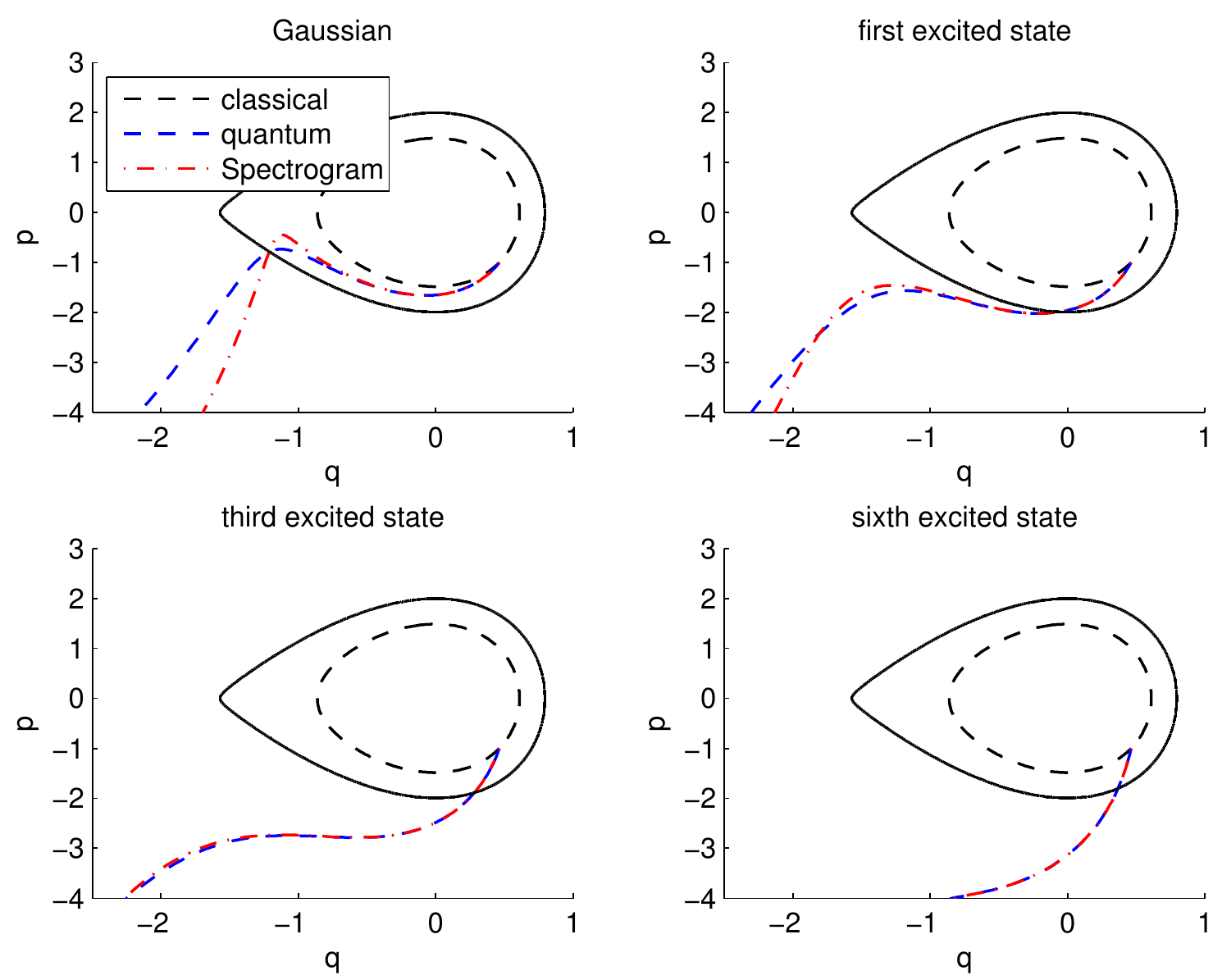}
\caption{\label{fig:tunneling}
Trajectories of the expected phase space centers obtained from quantum references and the new spectrogram algorithm
for different initial states $\psi_0=T_{z_0}\varphi_k$.
The dashed  black line shows the periodic classical orbit associated with $z_0$, and the solid black line illustrates
the border of  the trapping region.}
\end{figure}

The spectrogram algorithm is also capable of describing the evolution of the probability that the quantum particle escapes the potential well 
and is found in the region $(-\infty,x_{\rm max}]$, where $x_{\rm max} \approx -1.62$ is
the local maximum of the barrier potential.
To illustrate this property, we introduce the approximate escape probability
\[
P(t) = \left\langle \psi_t,\op(r) \psi_t\right\rangle \approx \|\psi_t \chi_{(-\infty, x_{\rm max}]} \|^2_{L^2}
\]
with the smooth symbol $r(q,p) = \exp(-0.01/(q-x_{\rm max})^2)\chi_{(-\infty, x_{\rm max}]}(q)$, where $\chi_A$ denotes
the characteristic function of the set $A$. $P(t)$ can easily be approximated by
the spectrogram algorithm, and accurate numerical references are available;
see Appendix~\ref{sec:ref_well}.
\begin{figure}[ht!]
\includegraphics[width=\textwidth]{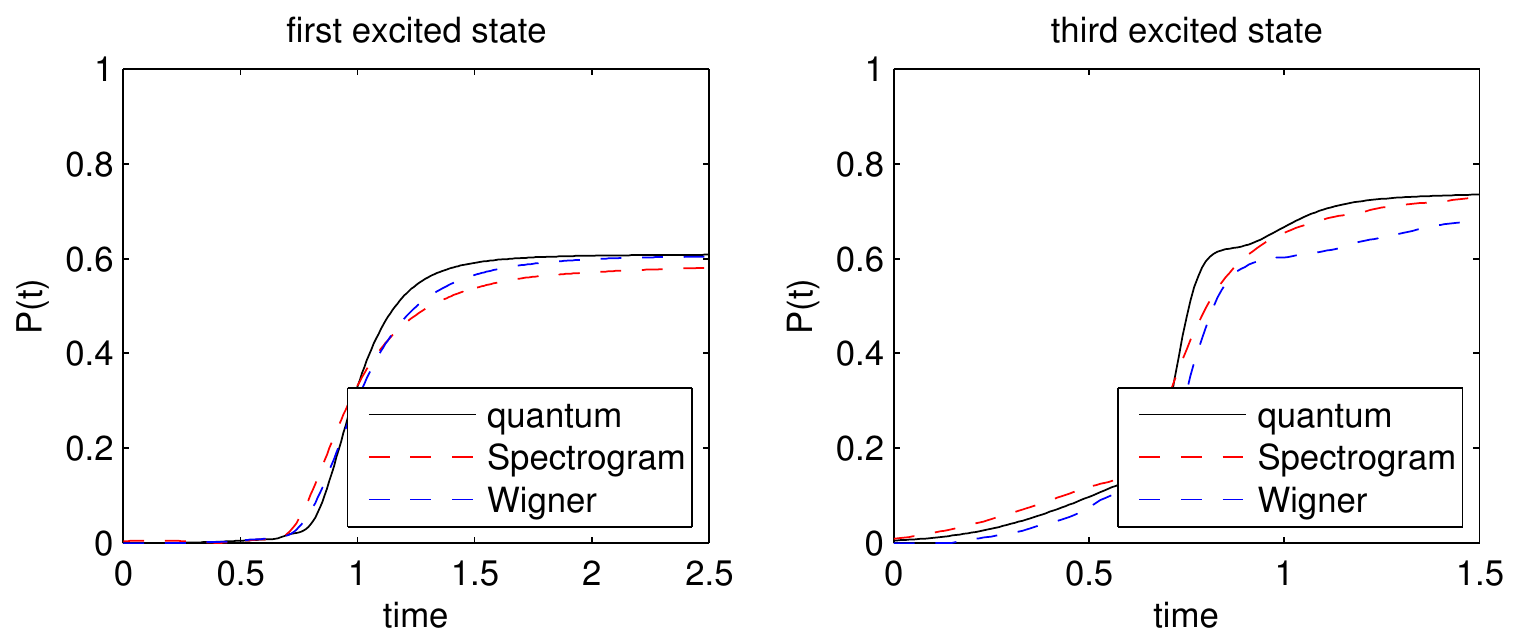}
\caption{\label{fig:tunnel_prob}
Approximate escape probabilities $P(t)$ computed from a highly accurate numerical quantum reference, and results of the
spectrogram and Wigner algorithms for the initial states $T_{z_0}\varphi_1$, and $T_{z_0}\varphi_3$.}
\end{figure}
Figure~\ref{fig:tunnel_prob} shows by means of two different initial states 
that the spectrogram algorithm 
yields a good qualitative picture of the evolution of escape probabilities.

\appendix
\section{Sampling by the Gamma distribution}\label{sec:gamma_sampling}
\subsection{Using the Gamma distribution}
We consider the Hermite functions translated by the Heisenberg--Weyl operator, 
$$
\psi=T_z\varphi_k, \qquad k\in\N^d, \qquad z\in\R^{2d}.
$$ 
Then, by the covariance property \eqref{eq:spectrogram-translated}, the Hermite spectrogram takes the form
\begin{equation*}
  (\W_{T_{z}\varphi_{k}}*\W_{\varphi_{e_j}})(w) = (\W_{\varphi_{k}}*\W_{\varphi_{e_j}})(w - z), \qquad w\in\Rdd,
\end{equation*}
and by Lemma~\ref{lem:hermite},  we only have to consider the two-dimensional probability densities
$$
w_j\mapsto h_n(w_j-z_j) = \frac{1}{2\pi\eps\cdot n!}\left|\tfrac{1}{\sqrt{2\eps}}(w_j-z_j)\right|^{2n} \exp\!\left(-\tfrac{1}{2\eps}|w_j-z_j|^2\right)
$$
with $n\in\{k_1,\ldots,k_d\}$.
Specifically, we first translate by $-z_j$ and then use the uniform distribution on $[0,2\pi]$ for the angular part combined with sampling from
$$
r\mapsto \frac{r^{2n+1}}{2^n \eps^{n+1} n!}  \exp\!\left(-\tfrac{1}{2\eps}r^2\right)
$$ 
for the radial part. Since
$$
\int_a^b   \frac{r^{2n+1}}{2^n \eps^{n+1} n!} \exp\!\left(-\tfrac{1}{2\eps}r^2\right) \d r =  \int_{a^2}^{b^2} \frac{\tau^n}{(2\eps)^{n+1}n!} \exp\!\left(-\tfrac{1}{2\eps}\tau\right) \d \tau
$$
for all $a,b>0$, we use the Gamma distribution with parameters $n+1$ and $2\eps$ for the radial sampling. 

In the particular case $k = 0$, $\varphi_{0} = g_{0}$ and so $T_{z}\varphi_{0} = g_{z}$ is a Gaussian wave packet and
$$
\mu_{g_z}(w) =  (2\pi\eps)^{-d}\left(1+\tfrac{d}{2}-\tfrac{1}{4\eps}|w-z|^2\right) \exp\!\left(-\tfrac{1}{2\eps}|w-z|^2\right),\qquad w\in\Rdd;
$$
see \S\ref{sec:gaussian}. One can directly sample this $2d$-dimensional probability density without factorizing its summands.
After translation by $-z$, one decomposes into a radial part and an independent angular variable which is uniformly 
distributed on $\mathbb{S}^{2d-1}$. The radial density is given by
\[
r \mapsto \frac{r^d}{(2\eps)^{d+1}d!} \exp\!\left(-r/2\eps\right),
\]
which corresponds to a Gamma distribution with parameters $d+1$ and $2\eps$.

\subsection{Monte Carlo sampling}\label{sec:MC}
Pseudorandom numbers uniformly distributed on a multi-dimensional unit sphere are obtained by sampling from multivariate normal 
distributions and subsequent normalization, while Gamma distributed pseudorandom samples only require the uniform distribution on 
the unit cube together with the (numerical) inverse of the cumulative Gamma distribution function. Hence a Monte-Carlo sampling of 
$\mu_{T_z \varphi_k}$ is straightforward.

\subsection{Quasi-Monte Carlo sampling}
For the generation of Quasi-Monte Carlo points we have heuristically mimicked the 
Monte-Carlo procedure of \S\ref{sec:MC} by replacing the pseudorandom samples from the uniform distribution on the unit cube by a Halton 
sequence. Whether the resulting points are of low discrepancy with respect to the distribution $\mu_{T_z\varphi_k}$ seems to 
be an open question not answered by the current literature, see e.g. \cite{AD14}. 

\newpage
\section{Numerical data}\label{app:num}
\subsection{Two-dimensional torsional system}
\label{app:num_torsional}
Table~\ref{tab:gauss_comp} contains the number of initial sampling points and the computational time for the new spectrogram algorithm. 
In the case of Monte Carlo integration, the results in Figure~\ref{fig:gauss_error} are averaged over ten independent runs.
For the time propagation we apply the above mentioned eighth-order symplectic integrator with time stepping~$10^{-1}$. 
The parameters of the grid-based reference solver are collected in Table~\ref{tab:reference_torsion}.

\begin{table}[ht!]
\centering
\begin{tabular}{c|| r | r|| r|r}
$\eps$ & MC points & comp. time & Halton points & comp. time \\ \hline
$10^{-1}$&$5\cdot 10^4 $&23s & $5\cdot 10^4$ & 16s \\
$5\cdot10^{-2}$&$ 3\cdot 10^5$&1m59s & $ 10^5$& 33s\\
$10^{-2}$& $6 \cdot 10^5$& 7m16s & $2\cdot 10^5$& 1m59s\\
$5\cdot 10^{-3}$& $1.5\cdot 10^6$& 14m15s & $8\cdot 10^5$& 6m50s\\
$10^{-3}$&$10 \cdot 10^6 $& 68m30s& $2\cdot 10^6$& 18m31s
\end{tabular}
\medskip
\caption{\label{tab:gauss_comp}  Computational data for the execution of the new spectrogram algorithm
for the  two dimensional torsional potential and Gaussian initial data on
the time interval $[0,20]$. The computation times are for one run only. They scale linearly with respect to the number of initial sampling points.}
\end{table}

\begin{table}[ht!]
\centering
\begin{tabular}{c|ccc}
$\eps$    & $\#$timesteps &comp. domain               & space grid\\\hline 
$10^{-1}$ & $5\cdot10^3$   & $[-3,3]\times[-3,3]$ & $1536\times 1536$\\
$5\cdot 10^{-2}$ & $5\cdot10^3$   & $[-3,3]\times[-3,3]$ & $1536\times 1536$\\
$10^{-2}$ & $7.5\cdot10^3$   & $[-2,2]\times[-2,2]$ & $2048\times 2048$\\
$5 \cdot 10^{-3}$ & $10^4$         & $[-2,2]\times[-2,2]$ & $2048\times 2048$\\
$10^{-3}$ & $10^4$         & $[-2,2]\times[-2,2]$ & $2048\times 2048$
\end{tabular}
\medskip
\caption{\label{tab:reference_torsion} Parameters of the grid-based reference solutions for the two-dimensional torsional potential.
The discretization has been done by Fourier collocation in $\R^2$ and Strang splitting in time.
}
\end{table}

\subsection{Henon--Heiles system for {\boldmath $d=32$}}
For the initial sampling we used $2^{17}$
Halton type points for all three initial densities, that is, the Wigner and the Husimi functions and the new density $\mu_{\psi_0}$. The time stepping of the eighth order time integrator is~$2\cdot 10^{-2}$.

\subsection{One-dimensional cubic well}\label{sec:ref_well}
For the spectrogram algorithm
we employed $2^{14}$ Halton points and the eighth-order integrator with time stepping $10^{-2}$,
which results in a computational time of $2$ seconds. The quantum references are generated by means of a 
Strang splitting with $2^{15}$ Fourier modes on the interval $[-40,4]$.

\bibliographystyle{siam}


\end{document}